\renewenvironment{proof}
{{\em Proof:}}{\hspace*{\fill}$\Box$\par\vspace{2mm}}
\def\comment#1{}%
\def\withcomments{%
  \newcounter{mycommentcounter}%
   \def\comment##1{\refstepcounter{mycommentcounter}%
    \ifhmode%
     \unskip%
     {\dimen1=\baselineskip \divide\dimen1 by 2 %
       \raise\dimen1\llap{\tiny
    {\textcolor{red}{\textbf{-\themycommentcounter-}}}}}\fi%
     \marginpar[{\renewcommand{\baselinestretch}{0.8}%
       \hspace*{-2em}\begin{minipage}{1.1\marginparwidth}\footnotesize%
[\themycommentcounter]:%
\raggedright ##1\end{minipage}}]{\renewcommand{\baselinestretch}{0.8}%
       \begin{minipage}{1.1\marginparwidth}\footnotesize%
[\textcolor{red}{\themycommentcounter}]: \raggedright%
##1\end{minipage}}}%
  }
\newcommand{\remove}[1]{}
\newcommand{\lr}[1]{\langle #1 \rangle}
\newcommand{\mlr}[1]{\ensuremath{\lr{#1}}\xspace}
\newcommand{\mmorph}[1]{$\mlr{#1}$\xspace}
\newcommand{\poly}[1]{\lr{#1}}
\newcommand{\mpoly}[1]{\ensuremath{\poly{#1}}}
\newcommand{\qcv}{quasi-contractible\xspace}
\newcommand{\conv}{{\sc Convexifier}\xspace}
\Crefname{lemma}{Lemma}{Lemmata}
\Crefname{figure}{Fig.}{Figs.}
\Crefname{section}{Section}{Sections}
\Crefname{cl}{Claim}{Claims}
\let\savedFn=\footnote
\renewcommand*{\footnote}[1]{%
   \textcolor{red}{\savedFn{#1}}}
\titlerunning{Morphing Planar Graph Drawings Efficiently}
\authorrunning{Angelini et al.}
\begin{document}
%\linenumbers

% \title{Planar Morphing of Planar Triangulations with a Polynomial Number of Steps}
% \title{Planar Morphing of Triangulations with Few Steps}
% \title{Morphing a Triangulation with a Polynomial Number of Steps}
%	 \title{Morphing Maximal Plane Graphs in Few Steps}

\title{Morphing Planar Graph Drawings Efficiently\thanks{Part of the research
was conducted in the framework of ESF project 10-EuroGIGA-OP-003 GraDR ``Graph
Drawings and Representations''.}}

\author{Patrizio Angelini\inst{1}, Fabrizio Frati\inst{2}, Maurizio
Patrignani\inst{1}, Vincenzo Roselli\inst{1}
% \author{ Pinco Pallo$\dag$
%
  %\small{\{angelini,gdb,patrigna,roselli\}@dia.uniroma3.it}\\
  }

\institute{Engineering Department, Roma Tre University, Italy\\
\email{\{angelini,patrignani,roselli\}@dia.uniroma3.it} \and School of
Information Technologies, The University of Sydney,
Australia\\\email{brillo@it.usyd.edu.au}}
\date{}

\maketitle
\begin{abstract}
A morph between two straight-line planar drawings\remove{with the same outer
face} of the same graph is a continuous transformation from the first to the
second drawing such that planarity is preserved at all times. Each step of the
morph moves each vertex at constant speed along a straight line. Although the
existence of a morph between any two drawings was established several
decades ago, only recently it has been proved that a polynomial number of steps
suffices to morph any two planar straight-line drawings. Namely, at SODA
$2013$, Alamdari \emph{et al.}~\cite{aacdfl-mpgdpns-13-c} proved that any two
planar straight-line drawings of a planar graph can be morphed in $O(n^4)$
steps, while $O(n^2)$ steps suffice if we restrict to maximal planar graphs.

In this paper, we improve upon such results, by showing an algorithm to morph
any two planar straight-line drawings of a planar graph in $O(n^2)$ steps;
further, we show that a morphing with $O(n)$ steps exists between any two planar
straight-line drawings of a series-parallel graph.
\end{abstract}

%%%%%%%%
%%%%%%%%
%%%%%%%%
%%%%%%%%
%%%%%%%%
%%%%%%%%
\section{Introduction}\label{se:introduction}

A \emph{planar morph} between two planar drawings of the same plane graph is a
continuous transformation from the first drawing to the second one such that
planarity is preserved at all times. The problem of deciding whether a planar
morph exists for any two drawings of any graph dates back to $1944$, when
Cairns~\cite{c-dprc-44} proved that any two straight-line drawings of a
maximal planar graph can be morphed one into the other while maintaining
planarity. In 1981, Gr\"unbaum and Shephard~\cite{gs-tgopg-81} introduced the
concept of \emph{linear morph}, that is a continuous transformation in which
each vertex moves at uniform speed along a straight-line trajectory. With this
further requirement, however, planarity cannot always be maintained for any pair
of drawings. Hence, the problem has been subsequently studied in terms of the
existence of a sequence of linear morphs, also called \emph{morphing steps},
transforming a drawing into another while maintaining planarity. The first
result in this direction is the one of Thomassen~\cite{t-dpg-83}, who proved
that a sequence of morphing steps always exists between any two straight-line
drawings of the same plane graph. Further, if the two input drawings are convex,
this property is maintained throughout the morph, as well. However, the number
of morphing steps used by the algorithm of Thomassen might be exponential in
the number of vertices.

Recently, the problem of computing planar morphs gained increasing
research attention. The case in which edges are not required to be
straight-line segments has been addressed in~\cite{lp-mpgdbe-08}, while morphs
between orthogonal graph drawings preserving planarity and orthogonality have been
explored in~\cite{lps-mopgd-2006}. Morphs preserving
more general edge directions have been considered in~\cite{bls-mpgwped-2005}. Also, the
problem of ``topological morphing'', in which the planar embedding is
allowed to change, has been addressed in~\cite{acdp-tmpg-08}.

In a paper appeared at SODA 2013, Alamdari \emph{et
al.}~\cite{aacdfl-mpgdpns-13-c} tackled again the original setting in which
edges are straight-line segments and linear morphing steps are required.
Alamdari \emph{et al.} presented the first morphing algorithms with a polynomial
number of steps in this setting. Namely, they presented an algorithm to morph
straight-line planar drawings of maximal plane graphs with $O(n^2)$ steps and of
general plane graphs with $O(n^4)$ steps, where $n$ is the number of vertices of
the graph.

In this paper we improve upon the result of Alamdari \emph{et
al.}~\cite{aacdfl-mpgdpns-13-c}, providing a more efficient algorithm to morph
general plane graphs. Namely, our algorithms uses $O(n^2)$ linear morphing
steps. Further, we provide a morphing algorithm with a linear number of steps
for a non-trivial class of planar graphs, namely series-parallel graphs.
These two main results are summarized in the following theorems.

\begin{theorem}\label{th:sp-morphing}
Let $\Gamma_a$ and $\Gamma_b$ be two drawings of the same plane series-parallel
graph $G$. There exists a morph \mmorph{\Gamma_a,\dots,\Gamma_b} with
$O(n)$ steps transforming $\Gamma_a$ into $\Gamma_b$ .
\end{theorem}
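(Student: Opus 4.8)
The plan is to induct on the number $n$ of vertices, using the structural fact that every simple series-parallel graph on at least three vertices has a vertex of degree at most two (series-parallel graphs are exactly the graphs of treewidth two, hence spanning subgraphs of a $2$-tree on the same vertex set, and a $2$-tree always has a degree-$2$ vertex, which can only lose degree in a subgraph). Let $v$ be such a vertex. The case $\deg v\le 1$ is trivial, so assume $v$ has exactly two neighbours $u$ and $w$. I would contract the edge $vu$, merging $v$ into $u$ so that the former edge $vw$ becomes $uw$, obtaining a series-parallel graph $G'$ on $n-1$ vertices; since the class of series-parallel graphs is minor-closed, $G'$ is again series-parallel, with the plane embedding inherited from $G$. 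The desired morph is then assembled in four parts: a preparation morph taking $\Gamma_a$ to a drawing $\Gamma_a'$ and $\Gamma_b$ to a drawing $\Gamma_b'$ in which $v$ is contractible onto $u$; the contraction, yielding drawings $\Gamma_a^c,\Gamma_b^c$ of $G'$; a recursive morph between $\Gamma_a^c$ and $\Gamma_b^c$; and the reverse preparation. The base case $n\le 2$ is one linear step.

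The crucial observation is that contracting a degree-$2$ vertex costs only $O(1)$ steps. Sliding $v$ linearly along the segment $\overline{vu}$ onto $u$ keeps the drawing planar exactly when the triangle $\triangle uvw$ is empty of the rest of the drawing: during the slide the edge from $v$ to $u$ stays a subsegment of the original $\overline{vu}$, while the edge from $v$ to $w$ sweeps precisely $\triangle uvw$. Hence the whole task reduces to moving $v$, by itself, to a position where this triangle is empty. Since $v$ has degree two it is incident to only two faces $f$ and $g$, so its feasible region is governed by constant local complexity; I would choose the side on which the interior angle at $v$ is at most $\pi$ and slide $v$ toward the chord $\overline{uw}$, invoking the \conv on the constant-size neighbourhood of $v$ if a reflex boundary chain of that face blocks straight visibility, thereby emptying the triangle in a constant number of steps.

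For the reinsertion I would not repeat any work per recursive step. After preparing, I bring $v$ to a point at distance $\epsilon$ from $u$ inside the chosen face, and during the entire recursive morph of $G'$ I let $v$ ride alongside $u$ at this $\epsilon$-offset; formally $v$ is \coll and is carried through the morph, which is the role of the \col operation. Each linear step of the recursive morph stays linear for $G$, and planarity is preserved because a small enough $\epsilon$ keeps the thin triangle $\triangle uvw$ empty throughout: by compactness of the time interval and continuity, the clearance between the relative interior of edge $uw$ and every other vertex, which is positive at all times since planarity is never violated, has a positive infimum, and I take $\epsilon$ below it. Thus reinsertion adds no steps beyond the $O(1)$ preparation and its reverse, which also reconcile the two $\epsilon$-offset directions at the endpoints. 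The recurrence $T(n)=T(n-1)+O(1)$ then gives the claimed $O(n)$ bound; a parallel composition in which $u$ and $w$ are already adjacent is treated by the \col operation instead of contracting, at the same $O(1)$ cost.

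The step I expect to be the main obstacle is the $O(1)$ preparation claim itself: proving that a \emph{constant} number of linear steps always suffices to drive a degree-$2$ vertex into an empty-triangle position, uniformly over all input drawings, rather than a number growing with the size of $\Gamma$. The danger is a long, non-convex incident face in which no single slide clears $\triangle uvw$ and a naive convexification would cost steps proportional to the complexity of that face. The resolution must exploit that only the constant-size data local to $v$ — its two incident edges and the boundary chains of $f$ and $g$ actually visible from $v$ — ever need to be touched; this locality, available precisely because the degree is bounded, is what separates the series-parallel reduction from the general $O(n^2)$ algorithm and what pins the per-vertex cost at $O(1)$.
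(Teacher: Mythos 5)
There is a genuine gap, and it is exactly the one you flag yourself: the claim that a \emph{constant} number of linear steps suffices to bring a degree-$2$ vertex $v$ into a position where the triangle on $v,u,w$ is empty. The emptiness of that triangle is obstructed by \emph{other} vertices of the drawing lying inside it or blocking visibility from any candidate position of $v$ to $u$ and $w$; clearing them cannot be done by moving $v$ alone, and the face of $G-v$ into which $v$ must be placed can be non-convex with $\Theta(n)$ reflex vertices, so no position with an empty triangle need even exist until the rest of the drawing is deformed. Your proposed fix, running \conv ``on the constant-size neighbourhood of $v$,'' misreads that subroutine: \conv convexifies a polygon of at most five vertices, but it does so by a morph of the \emph{entire} triangulated drawing and costs $O(n)$ linear steps regardless of the polygon's size, precisely because the rest of the graph must be moved out of the way while planarity is maintained globally. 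With an honest accounting your recurrence becomes $T(n)=T(n-1)+O(n)=O(n^2)$, which is the paper's \emph{general}-graph algorithm (Section~\ref{se:general_case}), not the series-parallel one.

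The paper escapes this in two ways that your argument is missing. First, it does not morph $\Gamma_a$ into $\Gamma_b$ directly (which would force you to make $v$ contractible onto the \emph{same} neighbour in both drawings); it morphs each drawing into a fixed \emph{canonical drawing} built from the decomposition tree out of boomerangs and diamonds, so that after the recursive call the uncontraction and local re-canonicalization cost only $O(1)$ steps because $\Gamma^*(G)$ and $\Gamma^*(G')$ differ only inside one diamond. Second, and crucially for the per-level cost, the contracted vertex is not an arbitrary degree-$2$ vertex: it is chosen inside an innermost P-node, on a cycle $\cal C$ bounded by two pole-to-pole paths with empty interior, which is then internally triangulated with dummy edges. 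By that structural choice the triangle $\triangle(v,v_1,v_2)$ is a face of the triangulated cycle and hence already empty in the \emph{given} drawing, so the contraction needs no preparatory morph at all (and a degree-$3$ fallback inside $\cal C$ handles the case where the only degree-$2$ vertices of the triangulated cycle are the poles). Your $\epsilon$-offset ``riding'' of $v$ during the recursive morph is essentially the paper's kernel/sector argument and is fine in spirit, but without the canonical-drawing target and the structurally chosen contraction vertex the $O(1)$ per-level bound does not hold.
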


\begin{theorem}\label{th:general_case}
Let $\Gamma_s$ and $\Gamma_t$ be two drawings of the same plane graph $G$.
There exists a morph \mmorph{\Gamma_s,\dots,\Gamma_t} with $O(n^2)$
steps transforming $\Gamma_s$ into $\Gamma_t$ .
\end{theorem}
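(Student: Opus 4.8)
The plan is to prove Theorem~\ref{th:general_case} by reduction to the
maximal-planar case, for which Alamdari \emph{et al.}\ already provide an
$O(n^2)$-step morph. The central idea is an \emph{augmentation}: given the two
drawings $\Gamma_s$ and $\Gamma_t$ of the plane graph $G$, I would augment $G$
to a maximal plane graph $G^+$ by adding edges (and, where needed, dummy
vertices) so that \emph{both} $\Gamma_s$ and $\Gamma_t$ can be completed to
straight-line planar drawings $\Gamma_s^+$ and $\Gamma_t^+$ of $G^+$ with the
same combinatorial embedding. If such a simultaneous augmentation exists, then
one morphs $\Gamma_s^+$ into $\Gamma_t^+$ using the known $O(n^2)$-step maximal
morph, and finally deletes the added edges and vertices from every intermediate
drawing; since deleting edges from a planar drawing keeps it planar, this yields
a valid $O(n^2)$-step morph of the original drawings.

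The first step is therefore to establish that a simultaneous triangulation is
always achievable while keeping the vertex count linear. Concretely, I would
triangulate each internal face and the outer face consistently in the two
drawings. The obstruction is that a chordless face in $G$ may be non-convex in
$\Gamma_s$ (or in $\Gamma_t$), so no single straight-line chord triangulates it
in both drawings simultaneously. To handle this, the plan is to subdivide
problematic edges or insert a bounded number of \emph{Steiner} vertices per
face placed at corresponding locations in the two drawings, so that each face
becomes a star-shaped (hence triangulable) region in both. Keeping the total
number of added vertices in $O(n)$ is essential so that the resulting graph
$G^+$ still has $O(n)$ vertices and the maximal-graph bound stays $O(n^2)$;
this counting is the part that requires care but is combinatorially routine
once the local augmentation gadget is fixed.

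The hard part will be the second step: guaranteeing that the augmentation
placement is \emph{compatible} with a straight-line drawing of $G^+$ that
respects the given embedding in \emph{both} endpoints. Adding a chord inside a
non-convex face creates a crossing, so I cannot simply draw the same edge in
both drawings; the Steiner vertices and their straight-line connections must be
positioned inside the (possibly non-convex) face of each drawing so that all
added segments stay inside the face and cross nothing. I would argue this via a
local triangulation of each simple polygon (every simple polygon admits a
triangulation using only its own vertices, so in fact no Steiner points are
needed for a fixed drawing), but the subtlety is choosing \emph{one} set of
diagonals that is simultaneously non-crossing in both $\Gamma_s$ and $\Gamma_t$.
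When no common diagonalization exists, I fall back to inserting a single interior
Steiner vertex per face, placed in the kernel of the star-shaped decomposition
in each drawing, which always yields a valid radial triangulation in both.

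Finally, I would assemble the pieces: (i) perform the simultaneous augmentation
to obtain $G^+$, $\Gamma_s^+$, $\Gamma_t^+$ with $|V(G^+)| = O(n)$; (ii) apply
the maximal-graph morphing algorithm of Alamdari \emph{et al.} to get a sequence
$\mlr{\Gamma_s^+,\dots,\Gamma_t^+}$ with $O(n^2)$ steps; (iii) restrict each
drawing in the sequence to the vertices and edges of $G$, which does not
increase the number of morphing steps and preserves planarity at every instant.
Combining the linear vertex bound with the quadratic step bound for maximal
graphs gives the claimed $O(n^2)$ steps, completing the proof.
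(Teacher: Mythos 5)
There is a genuine gap, and it sits exactly where you flagged ``the hard part'': the claim that a simultaneous (compatible) triangulation of $\Gamma_s$ and $\Gamma_t$ can always be achieved with only $O(n)$ added Steiner vertices. This is false in the worst case. Two straight-line drawings of the same chordless face (a simple polygon drawn two different ways) may admit no common set of non-crossing diagonals, and it is known that a compatible triangulation of two simple $n$-gons can require $\Omega(n^2)$ Steiner points. Your fallback gadget does not rescue this: a single interior Steiner vertex connected radially to all face vertices only works if the face is star-shaped (i.e., has non-empty kernel) in \emph{both} drawings, and a face of $G$ may be drawn as a polygon with empty kernel in $\Gamma_s$ or $\Gamma_t$; a compatible star-shaped decomposition into several pieces is the same problem in disguise and also needs $\Omega(n^2)$ pieces in the worst case. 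In fact, your plan is essentially the argument by which Alamdari \emph{et al.} obtain their $O(n^4)$ bound for general plane graphs: the common maximal augmentation has $m \in O(n^2)$ vertices, and the maximal-planar morph then costs $O(m^2)=O(n^4)$ steps. So, as written, your proof establishes $O(n^4)$, not $O(n^2)$.

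The paper gets around this by never asking the two augmentations to agree globally. It picks a quasi-contractible vertex $v$ (degree at most $5$, no edge between non-consecutive neighbors), augments $\Gamma_s$ and $\Gamma_t$ \emph{separately} to maximal plane graphs that are only required to coincide on the constant-size subgraph induced by the neighbors of $v$ (which costs $O(1)$ added vertices), runs the \conv procedure of Alamdari \emph{et al.} for $O(n)$ steps per drawing to make $v$ $x$-contractible onto a common neighbor $x$ in both, contracts $v$ in both drawings, and recurses on the resulting $(n-1)$-vertex graph; the recurrence $p(n)=p(n-1)+O(n)$ gives $O(n^2)$. The contractions are then undone by the pseudo-morph-to-morph conversion of Section~\ref{se:geometry}. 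To salvage your reduction you would need a genuinely new argument that a compatible triangulation with $O(n)$ Steiner points exists for a plane graph whose faces come with two prescribed straight-line realizations, and the quadratic lower bound for polygons shows this cannot be obtained face by face.
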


The rest of the paper is organized as follows. Section~\ref{se:preliminaries}
contains preliminaries and basic terminology. Section~\ref{se:algorithm}
describes an algorithm to morph series-parallel graphs. Section~\ref{se:general_case}
describes an algorithm to morph plane graphs. Section~\ref{se:geometry}
provides
geometric details for the morphs described in Sections~\ref{se:algorithm}
and~\ref{se:general_case}. Finally, Section~\ref{se:conclusions} contains
conclusions and open problems.

\section{Preliminaries}\label{se:preliminaries}

\textbf{Planar graphs and drawings.} A \emph{straight-line planar drawing}
$\Gamma$ (in the following simply
\emph{drawing}) of a graph $G(V,E)$ maps vertices in $V$ to distinct points of
the plane and edges in $E$ to non-intersecting open straight-line segments
between their end-vertices. Given a vertex $v$ of a graph $G$, we denote by
$\deg(v)$ the \emph{degree} of $v$ in $G$, that is, the number of vertices
adjacent to $v$.
A planar drawing $\Gamma$ partitions the plane into connected regions
called~\emph{faces}. The unbounded face is the~\emph{external face}. Also,
$\Gamma$ determines a clockwise order of the edges incident to each
vertex. Two planar drawings are \emph{equivalent} if they determine the same
clockwise ordering of the incident edges around each vertex and if they have the
same external face. A
\emph{planar embedding} is an equivalence class of planar drawings\remove{
and, hence, is described by the clockwise order of the edges incident
to each vertex and by
the choice of the external face}.
A \emph{plane} graph is a planar graph with a given planar embedding.

\textbf{Series-parallel graphs and their decomposition.} A \emph{two-terminal
series-parallel graph} $G$ with source $s$ and target $t$
can be recursively defined as follows: (i) An edge joining two vertices $s$ and
$t$ is a two-terminal series-parallel graph. Let $G'$ and $G''$ be two
two-terminal series-parallel graphs with sources $s'$ and $s''$, and targets
$t'$ and $t''$, respectively: (ii) The \emph{series composition} of $G'$ and
$G''$ obtained by identifying $s''$ with $t'$ is a two-terminal series-parallel
graph with source $s'$ and target $t''$; and (iii) the \emph{parallel
composition} of $G'$ and $G''$ obtained by identifying $s'$ with $s''$ and $t'$
with $t''$ is a two-terminal series-parallel graph with source $s'$ and target
$t'$.

A \emph{biconnected series-parallel graph} is defined as either a single edge or
a two-terminal series-parallel graph with the addition of an edge, called
\emph{root edge}, joining $s$ and $t$. In the following we deal with biconnected
series-parallel graphs not containing multiple edges.

A \emph{series-parallel graph} is a connected graph whose
biconnected components are biconnected series-parallel graphs.

A biconnected series-parallel graph $G$ with root edge $e$ is naturally
associated with an ordered binary tree $T^b_e$ rooted at $e$, called
\emph{decomposition binary tree}. Each node of $T^b_e$, with the exception of
the one associated to $e$, corresponds to a two-terminal series-parallel graph.
Nodes of $T^b_e$ are of three types, S-nodes, P-nodes, and Q-nodes. Each Q-node
represents a single edge. Each S-node represents the series composition of the
two-terminal series-parallel graphs associated with its left and
right subtrees. Finally, each P-node represents the parallel
composition of the
two-terminal series-parallel graphs associated with its left and
right subtrees.

Observe that, a graph $G$ may admit more than one binary decomposition
tree.
Also, since all internal nodes of $T^b_e$ have degree three, if $T^b_e$ is
rerooted at any other Q-node, corresponding to an edge $e' \neq e$, the obtained
ordered binary tree $T^b_{e'}$ defines a new set of compositions yielding the
same graph $G$ with root edge $e'$.

Let $G$ be an embedded biconnected series-parallel graph and let $e$ be an edge
incident to the external face of $G$. Let $T^b_e$ be one of its binary
decomposition
trees rooted at $e$. In order to have a unique \emph{decomposition tree} $T_e$
of $G$ rooted at $e$, we merge together all adjacent P-nodes and all adjacent
S-nodes of $T^b_e$. The order of the children of an S-node of $T_e$ reflects the
order of the leaves of the subtree of $T^b_e$ induced by the merged S-nodes.
Observe that, for each P-node $\mu$ of $T_e$, the embedding of $G$ induces a
circular order on the two-terminal series-parallel graphs corresponding to the
children of $\mu$. We order the children of $\mu$ according to such an
ordering.

\textbf{Morphs and Pseudo-Morphs.} A \emph{(linear)
morphing step} \mmorph{\Gamma_1,\Gamma_2}, also referred to as \emph{linear
morph}, of two straight-line planar drawings $\Gamma_1$ and
$\Gamma_2$ of a plane graph $G$ is a continuous transformation of $\Gamma_1$
into $\Gamma_2$ such that all the vertices simultaneously
start moving from their positions in $\Gamma_1$ and, moving along a
straight-line trajectory, simultaneously stop at their positions in $\Gamma_2$
so that no crossing occurs between any two edges during the
transformation.
A \emph{morph} \mmorph{\Gamma_s,\dots,\Gamma_t} of two straight-line planar
drawings $\Gamma_s$ into $\Gamma_t$ of a plane graph $G$ is a finite sequence of
morphing steps that transforms $\Gamma_s$ into $\Gamma_t$.

Let $u$ and $w$ be two vertices of $G$ such that edge $(u,w)$ belongs to $G$
and let $\Gamma$ be a straight-line planar drawing of $G$.
The \emph{contraction} of $u$ onto $w$ results in
\begin{inparaenum}[$(i)$]
\item a graph $G'= G / (u,w)$ not containing $u$ and such that each edge
$(u,x)$ of $G$ is replaced by an edge $(w,x)$ in $G'$, and
\item a straight-line drawing $\Gamma'$ of $G'$ such that each vertex different
from $v$ is mapped to the same point as in $\Gamma$.
\end{inparaenum}
In the rest of the paper, the contraction of an edge $(u,w)$ will be only
applied if the obtained drawing $\Gamma'$ is planar. The \emph{uncontraction} of
$u$ from $w$ in $\Gamma'$ yields a straight-line planar drawing $\Gamma''$ of
$G$.
A morph in which contractions are performed, possibly together with other
morphing steps, is a \emph{pseudo-morph}.

\textbf{Kernel of a vertex.} Let $v$ be a vertex of $G$ and let $G'$ be the
graph obtained by removing $v$ and its incident edges from $G$. Let $\Gamma'$ be
a planar straight-line drawing of $G'$. The \emph{kernel} of $v$ in $\Gamma'$ is
the set $P$ of points such that straight-line segments can be drawn in $\Gamma'$
connecting each point $p \in P$ to each neighbor of $v$ in $G$ without
intersecting any edge in $\Gamma'$.

\section{Morphing Series-Parallel Graph Drawings in $\mathbf{O(n)}$
Steps}\label{se:algorithm}

The aim of this section is to prove the following theorem:

\begin{theorem}\label{th:sp-pseudo-morphing}
Let $\Gamma_a$ and $\Gamma_b$ be two drawings of the same plane series-parallel
graph $G$. There exists a pseudo-morph \mmorph{\Gamma_a,\dots,\Gamma_b} with
$O(n)$ steps transforming $\Gamma_a$ into $\Gamma_b$ .
\end{theorem}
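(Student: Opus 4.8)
The plan is to prove Theorem~\ref{th:sp-pseudo-morphing} by induction on the number of vertices $n$ of the series-parallel graph $G$, using a vertex-contraction strategy that reduces $G$ to a smaller series-parallel graph while keeping the number of morphing steps linear in the size of the removed structure. The key idea is to identify a vertex $v$ that is ``cheap'' to remove: one that is \emph{\coll}, meaning we can contract $v$ onto a suitable neighbor $w$ in both $\Gamma_a$ and $\Gamma_b$ after $O(1)$ morphing steps, obtaining two drawings $\Gamma_a'$ and $\Gamma_b'$ of the smaller plane series-parallel graph $G' = G/(v,w)$. By induction, $\Gamma_a'$ can be morphed into $\Gamma_b'$ with $O(n-1)$ steps; prepending the $O(1)$ steps that contract $v$ in $\Gamma_a$ and appending the reverse of the $O(1)$ steps that contract $v$ in $\Gamma_b$ (i.e., the uncontraction of $v$) then yields a pseudo-morph of $G$ whose total step count satisfies a recurrence of the form $f(n) = f(n-1) + O(1)$, giving $f(n) = O(n)$.

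First I would exploit the structure of the decomposition tree $T_e$ to locate the vertex to contract. A natural candidate is a vertex of low degree sitting at the ``bottom'' of the decomposition, for example an internal vertex of a maximal chain of S-nodes, which is a \degt vertex whose two neighbors are consecutive along a path. Such a vertex $v$ has a well-understood local neighborhood: its kernel in $G - v$ is controlled by just its two neighbors, so there is room to slide $v$ toward one of them. The contraction of $v$ onto a neighbor $w$ preserves the series-parallel structure, since collapsing a degree-two vertex in a series composition simply shortens that series chain, and therefore $G'$ is again a (plane) series-parallel graph on $n-1$ vertices, keeping the induction well-defined.

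The heart of the argument is showing that $v$ can actually be contracted in both input drawings using only a constant number of linear morphing steps while preserving planarity throughout. I would argue this via the kernel: before contracting $v$ onto $w$, I move $v$ (and possibly reroute a bounded number of nearby vertices) so that the open segment from $v$'s position to $w$'s position lies entirely inside the kernel of $v$ in $G - v$, after which the contraction is a planar operation. Establishing that a constant number of steps suffices to bring $v$ into such a position — essentially a ``convexification'' of $v$'s local neighborhood so that $v$ sees $w$ and can be slid onto it without crossings — is the main obstacle, and I expect it to be handled by a dedicated geometric routine (a \conv-style procedure) whose detailed planarity and step-count guarantees are deferred to the geometry section (Section~\ref{se:geometry}).

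Finally I would assemble the pieces: let $\sigma_a$ be the $O(1)$-step sequence contracting $v$ in $\Gamma_a$, let $\sigma_b$ be the $O(1)$-step sequence contracting $v$ in $\Gamma_b$, and let $M'$ be the inductively obtained $O(n-1)$-step morph from $\Gamma_a'$ to $\Gamma_b'$. The concatenation of $\sigma_a$, then $M'$, then the reversal of $\sigma_b$ (which uncontracts $v$ back into $\Gamma_b$) is a valid pseudo-morph from $\Gamma_a$ to $\Gamma_b$, since each piece is planar and the contracted vertex is correctly restored to its target position. The base case, in which $G$ is small enough that $\Gamma_a$ and $\Gamma_b$ are morphed directly in $O(1)$ steps, closes the induction, and the additive recurrence yields the claimed $O(n)$ bound.
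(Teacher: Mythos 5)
There is a genuine gap, and it sits exactly where you flag it as ``the main obstacle'': your recurrence $f(n)=f(n-1)+O(1)$ requires that a chosen vertex $v$ can be made contractible onto the \emph{same} neighbor $w$, in \emph{both} arbitrary input drawings, after only $O(1)$ linear morphing steps. You do not justify this, and the only tool you point to (a \conv-style convexification of $v$'s neighborhood) is known to cost $\Theta(n)$ steps per invocation; with that cost your recurrence becomes $f(n)=f(n-1)+O(n)=O(n^2)$, which is precisely the paper's bound for \emph{general} plane graphs, not the claimed $O(n)$ bound for series-parallel graphs. The difficulty is real: a \degt vertex $v$ on an S-chain can be contracted onto a neighbor in one planar step only if the triangle spanned by $v$ and its two neighbors is empty in the current drawing, and which vertex enjoys this property depends on the geometry of the drawing, so the vertex that is cheap to contract in $\Gamma_a$ need not be cheap (or contractible onto the same neighbor) in $\Gamma_b$.

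The paper escapes this by \emph{not} recursing on the pair $(\Gamma_a,\Gamma_b)$. Instead it defines a \emph{canonical drawing} $\Gamma^*(G)$ (built from boomerangs and diamonds assigned to the nodes of the decomposition tree) and shows that any drawing can be pseudo-morphed to it in $O(n)$ steps; the two input drawings are then connected through $\Gamma^*(G)$. In that scheme the contraction in the arbitrary source drawing costs a single step with \emph{no} preparatory morphing, because $v$ is chosen very carefully: the paper takes a lowest P-node, internally triangulates the empty cycle formed by two of its paths, and picks a vertex of degree $2$ (or, in its Case~2, degree $3$) of the resulting outerplane graph, whose incident triangles are faces of the triangulation and hence empty --- note that in Case~2 the contraction is even onto a vertex that is \emph{not} a neighbor in $G$ but only in the augmented graph. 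The constant number of ``repair'' steps is then spent only on the canonical side, after uncontraction, where the boomerang/diamond geometry is fully under control. Your proposal is also silent on reducing the simply-connected case to the biconnected one (the decomposition tree you invoke is only defined for biconnected series-parallel graphs); the paper handles this by augmenting $G$ to a biconnected series-parallel graph of linear size and restricting the resulting morph. To repair your argument you would either have to prove the unsupported $O(1)$-step contractibility claim for arbitrary drawings (which appears false as stated) or introduce an intermediate canonical drawing as the paper does.
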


We will show in \Cref{sse:biconnected-sp} an algorithm that, given two drawings
of the same biconnected plane series-parallel graph $G$, computes a pseudo-morph
transforming one drawing into the other. Then, in
\Cref{sse:simply-connected-sp} we extend this approach to simply-connected
series-parallel graphs, thus proving Theorem~\ref{th:sp-pseudo-morphing}.

%
%%%
%%%%%
%%%
%
\subsection{Biconnected Series-Parallel Graphs}\label{sse:biconnected-sp}

In this section, we show an algorithm to construct a pseudo-morph transforming
one drawing of a biconnected plane series-parallel graph into another.

Our approach consists of morphing any drawing $\Gamma$ of a biconnected plane
series-parallel graph $G$ into a ``canonical drawing'' $\Gamma^*$ of $G$ in a
linear number of steps. As a consequence, any two drawings $\Gamma_1$ and
$\Gamma_2$ of $G$ can be transformed one into the other in a linear number of
steps, by morphing $\Gamma_1$ to $\Gamma^*$ and $\Gamma^*$ to $\Gamma_2$.

A \emph{canonical drawing} $\Gamma^*$ of a biconnected plane series-parallel
graph $G$ is defined  as follows. The decomposition tree $T_e$ of $G$ is
traversed top-down and a suitable geometric region of the plane is assigned to
each node $\mu$ of $T_e$; such a region will contain the drawing of the
series-parallel graph associated with $\mu$. The regions assigned to the nodes
of $T_e$ are similar to those used in~\cite{acdfp-mdgtr-12,adkmrsw-mdgfe-13} to
construct monotone drawings. Namely, we define three types of regions: Left
boomerangs, right boomerangs, and diamonds. A \emph{left boomerang} is a
quadrilateral with vertices $N, E, S$, and $W$ such that $E$ is inside triangle
$\triangle(N, S, W)$, where $|\overline{NE}|=|\overline{SE}|$ and
$|\overline{NW}|=|\overline{SW}|$ (see
Fig.~\ref{fig:boomerang-a}). A \emph{right boomerang} is defined
symmetrically, with $E$ playing the role of $W$, and vice versa (see
Fig.~\ref{fig:boomerang-b}). A \emph{diamond} is a convex
quadrilateral with vertices $N, E, S,$ and $W$, where
$|\overline{NW}|=|\overline{NE}|=|\overline{SW}|=|\overline{SE}|$. Observe that
a diamond contains a left boomerang $N_l, E_l, S_l, W_l$ and a right boomerang
$N_r, E_r, S_r, W_r$, where $S=S_l=S_r$,
$N=N_l=N_r$, $W=W_l$, and $E=E_r$ (see Fig.~\ref{fig:boomerang-c}).

\begin{figure}[htb]
\def\sf {.38}
\subfigure[]{\includegraphics[scale=\sf]{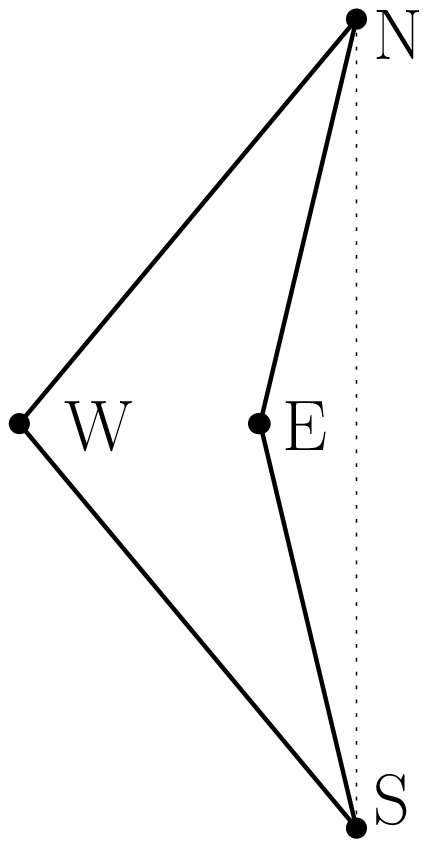}\label{fig:boomerang-a}
}\hfill
\subfigure[]{\includegraphics[scale=\sf]{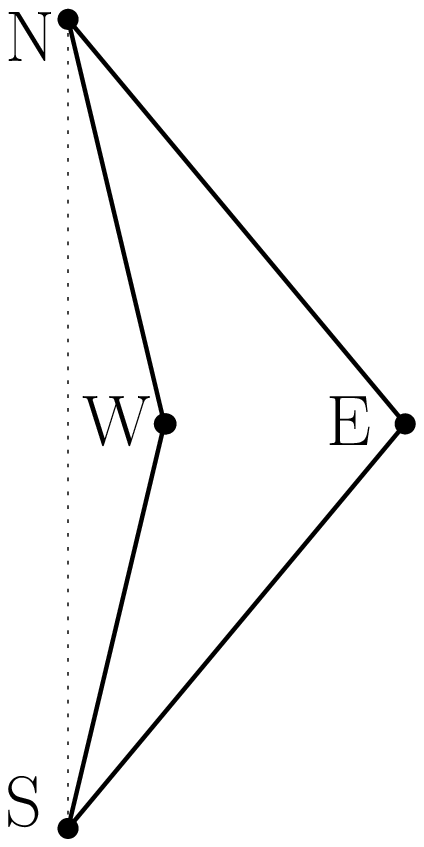}\label{fig:boomerang-b}
}\hfill
\subfigure[]{\includegraphics[scale=\sf]{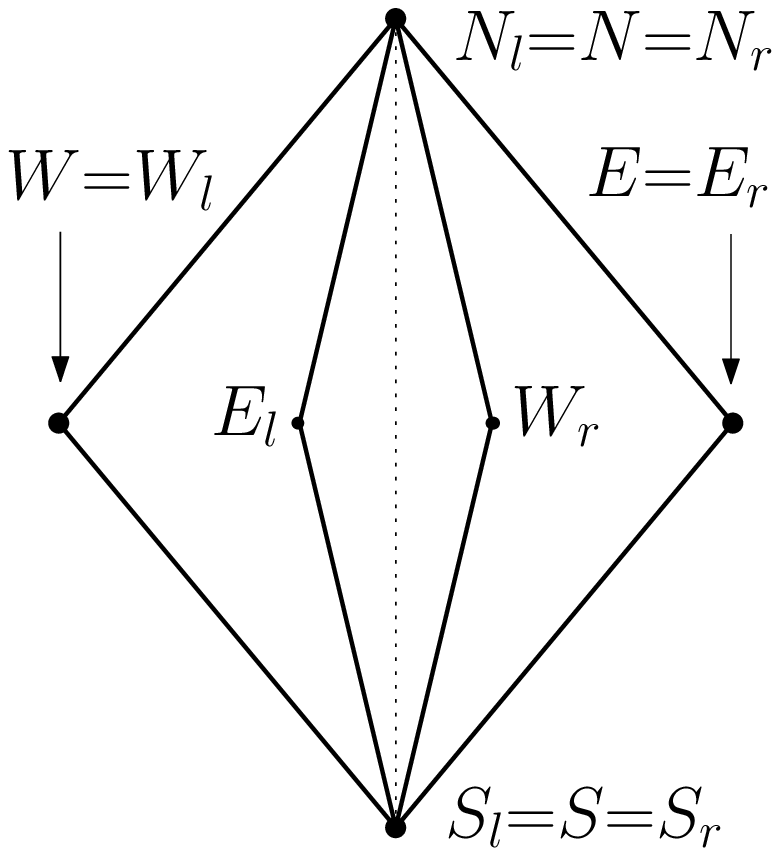}\label{fig:boomerang-c}}\hfill
\subfigure[]{\includegraphics[scale=\sf]{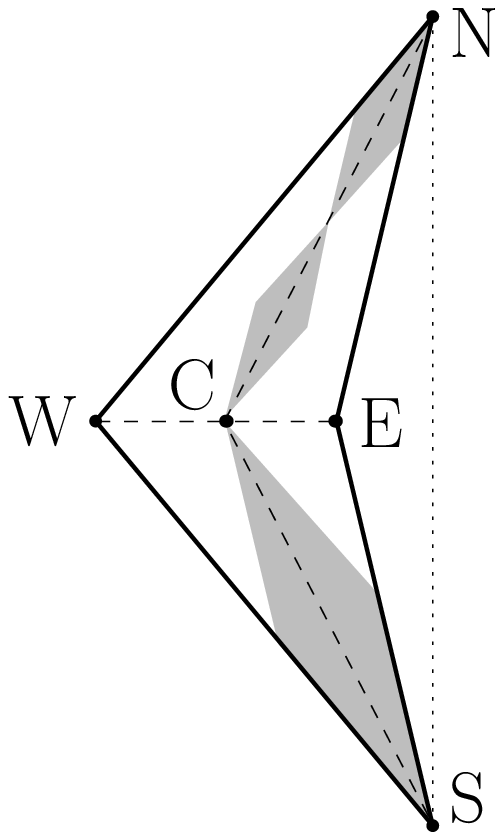}\label{fig:boomerang-d}}\hfill
\subfigure[]{\includegraphics[scale=\sf]{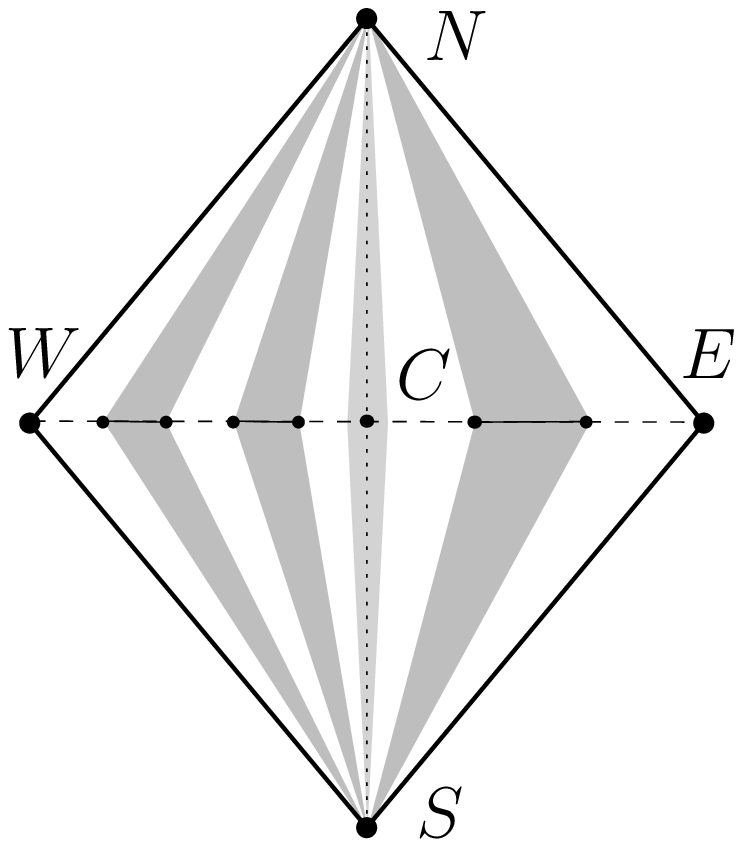}\label{fig:boomerang-e}}
\caption{\small (a) A left boomerang. (b) A right boomerang. (c) A diamond. (d)
Diamonds inside a boomerang. (e)
Boomerangs (and a diamond) inside a diamond.}
\label{fig:boomerang}
% \end{center}
\end{figure}

We assign boomerangs (either left or right, depending on the embedding of $G$)
to S-nodes and diamonds to P- and Q-nodes, as follows.

First, consider the Q-node $\rho$ corresponding to the root edge $e$ of $G$. Draw edge $e$ as a segment between points $(0,1)$ and $(0,-1)$. Also, if $\rho$ is adjacent to an S-node $\mu$, then assign to $\mu$ the left boomerang $N=(0,1), E=(-1,0), S=(0,-1), W=(-2,0)$ or the right boomerang $N=(0,1), E=(2,0),
S=(0,-1), W=(1,0)$, depending on the embedding of $G$; if $\rho$ is adjacent to a P-node $\mu$, then associate to $\mu$ the diamond $N=(0,1), E=(+2,0), S=(0,-1), W=(-2,0)$.

Then, consider each node $\mu$ of $T_e(G)$ according to a top-down traversal.

If $\mu$ is an S-node (see Fig.~\ref{fig:boomerang-d}), let $N, E, S, W$ be the boomerang associated with it and let $\alpha$ be the angle $\widehat{WNE}$. We associate diamonds to the children $\mu_1, \mu_2, \dots, \mu_k$ of $\mu$ as follows. Consider the midpoint $C$ of segment $\overline{WE}$. Subdivide $\overline{NC}$ into $\lceil\frac{k}{2}\rceil$ segments with the same length and $\overline{CS}$ into
$\lfloor\frac{k}{2}\rfloor$ segments with the same length. Enclose each of such
segments $\overline{N_i S_i}$, for $i=1, \dots, k$, into a diamond $N_i,
E_i, S_i,W_i$, with $\widehat{W_i N_i E_i} = \alpha$, and associate it with
child $\mu_i$ of $\mu$.

If $\mu$ is a P-node (see Fig.~\ref{fig:boomerang-e}), let $N, E, S,W$ be the
diamond associated with it.
Associate boomerangs and diamonds to the
children $\mu_1, \mu_2, \dots, \mu_k$ of $\mu$ as follows. If a child $\mu_l$ of $\mu$ is a Q-node, then left boomerangs are associated to
$\mu_1, \dots, \mu_{l-1}$, right boomerangs are associated to $\mu_{l+1}, \dots, \mu_k$, and a diamond is associated to $\mu_l$. Otherwise,
right boomerangs are associated to all of $\mu_1, \mu_2, \dots, \mu_k$. We
assume that a child $\mu_l$ of $\mu$ that is a Q-node exists, the description
for the case in which no child of $\mu$ is a Q-node being similar and simpler.
We describe how to associate left boomerangs to the children $\mu_1, \mu_2,
\dots, \mu_{l-1}$ of $\mu$. Consider the midpoint $C$ of segment $\overline{WE}$
and consider $2l$ equidistant points $W=p_1, \dots, p_{2l} = C$ on segment
$\overline{WC}$. Associate each child $\mu_i$, with $i = 1, \dots, l-1$, to the
quadrilateral $N_i = N, E_i= p_{2i}, S_i = S, W_i = p_{2i+1}$. Right boomerangs
are associated to $\mu_{l+1}, \mu_{l+2}, \dots, \mu_k$ in a symmetric way.
Finally, associate $\mu_l$ to any diamond such that $N_l = N, S_l = S$, $W_l$
is any point between $C$ and $E_{l-1}$, and $E_l$ is any point
between $C$ and $W_{l+1}$.

If $\mu$ is a Q-node, let $N, E, S, W$ be the diamond associated with it. Draw
the edge corresponding to $\mu$ as a straight-line segment between $N$
and $S$.

Observe that the above described algorithm constructs a drawing of $G$, that we
call the {\em canonical drawing} of $G$. We now argue that no two edges $e_1$
and $e_2$ intersect in the canonical drawing of $G$. Consider the lowest common
ancestor $\nu$ of the Q-nodes $\tau_1$ and $\tau_2$ of $T_e$ representing $e_1$
and $e_2$, respectively. Also, consider the children $\nu_1$ and $\nu_2$ of
$\nu$ such that the subtree of $T_e$ rooted at $\nu_i$ contains $\tau_i$, for
$i=1,2$. Such children are associated with internally-disjoint regions of the
plane. Since the subgraphs $G_1$ and $G_2$ of $G$ corresponding to $\nu_1$ and
$\nu_2$, respectively, are entirely drawn inside such regions, it follows that
$e_1$ and $e_2$ do not intersect except, possibly, at common endpoints.

In order to construct a pseudo-morph of a straight-line planar drawing
$\Gamma(G)$ of $G$ into its canonical drawing $\Gamma^*(G)$, we do the
following: (i) We perform a contraction of a vertex $v$ of $G$ into a neighbor
of $v$, hence obtaining a drawing $\Gamma(G')$ of a graph $G'$ with $n-1$
vertices; (ii) we inductively construct a pseudo-morph from $\Gamma(G')$ to the
canonical drawing $\Gamma^*(G')$ of $G'$; and (iii) we uncontract $v$ and
perform a sequence of morphing steps to transform $\Gamma^*(G')$ into the
canonical drawing $\Gamma^*(G)$ of~$G$.

We describe the three steps in more detail.

%%%%
%%%%
%%%%
\subsection{Step 1: Contract a Vertex $v$}\label{app:step-1}

Let $T_e(G)$ be the decomposition tree of $G$ rooted at some edge $e$ incident
to the outer face of $G$. Consider a P-node $\nu$ such that the subtree of
$T_e(G)$ rooted at $\nu$ does not contain any other P-node. This implies that
all the children of $\nu$, with the exception of at most one Q-node, are S-nodes
whose children are Q-nodes. Hence, the series-parallel graph $G(\nu)$ associated
to $\nu$ is composed of a set of paths connecting its poles $s$ and $t$. Let
$p_1$ and $p_2$ be two paths joining $s$ and $t$ and such that their union
is a cycle $\cal C$ not containing other vertices in its interior (see
Fig.~\ref{fig:paths-a}). Such paths exist given that the ``rest of the graph''
with respect to $\nu$ is in the outer face of $G(\nu)$, given that the root $e$
of $T_e(G)$ is incident to the outer face of $G$. Internally triangulate $\cal
C$ by adding dummy edges (dashed edges of Fig.~\ref{fig:paths}). Cycle $\cal C$
and the added dummy edges yield a drawing of a biconnected outerplane graph $O$
which, hence, has at least two vertices of degree two.

\begin{figure}[htb]
\begin{center}
\subfigure[]{\includegraphics[scale=1.1]{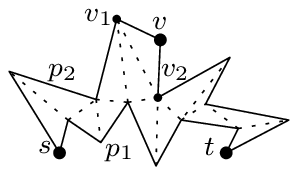}\label{fig:paths-a}}
\hspace{1em}
\subfigure[]{\includegraphics[scale=1.1]{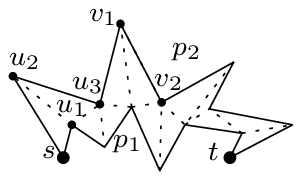}\label{fig:paths-b}}
\hspace{1em}
\subfigure[]{\includegraphics[scale=1.1]{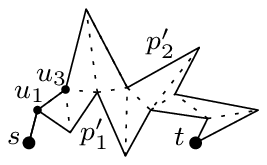}\label{fig:paths-c}}
\caption{The internally triangulated cycle $\cal C$ formed by paths $p_1$ and
$p_2$.
Dummy edges are drawn as dashed lines. (a--b) Vertex $v$ of degree $2$ can be
contracted onto $v_1$. (b--c) Vertex $u_2$ of degree $3$ can be
contracted onto $u_1$.\label{fig:paths}}
\end{center}
\end{figure}

We distinguish two cases depending on the existence of a degree-$2$ vertex $v$
different from $s$ and $t$.

\begin{description}

\item[{Case 1} (there exists a vertex $v$ of degree $2$ different from $s$ and
$t$).]
Assume, without loss of generality, that $v$ belongs to $p_2$. Since $O$ is
internally triangulated, both the neighbors $v_1$ and $v_2$ of $v$ belong to
$p_2$, and they are joined by a dummy edge. We obtain $\Gamma(G')$ from
$\Gamma(G)$ by contracting $v$ onto one of its neighbors, while preserving
planarity (see Figs.~\ref{fig:paths-a} and~\ref{fig:paths-b}).
Either $p_2$ contains more than two edges (\emph{Case 1.1}) or $p_2$ consists of
exactly two edges, namely $(v_1,v)$ and $(v,v_2)$. If the latter case holds,
either edge $(v_1,v_2)$ exists in $G$ (\emph{Case 1.2}) or not (\emph{Case
1.3}). In the three cases we do the following.

\begin{description}
\item[\emph{Case 1.1})] Path $p_2$ is replaced in $G'$ with a path $p_2'$ that
contains edge $(v_1,v_2)$ and does not contain vertex $v$.

\item[\emph{Case 1.2})] Graph $G'$ is set as $G \setminus \{v\}$.

\item[\emph{Case 1.3})] Path $p_2$ is replaced in $G'$ with edge~$(v_1,v_2)$.
\end{description}

\item[{Case 2} (the only two vertices of degree $2$ in $O$ are $s$ and $t$).]
In this case, one of the two vertices $u_1$ and $u_2$ adjacent to $s$ has degree
$3$, say $u_2$ (since the removal of $s$ and its incident edges would yield
another biconnected outerplane
graph with two vertices of degree $2$, namely $t$ and one of $u_1$ and
$u_2$). We obtain $\Gamma(G')$ from $\Gamma(G)$ by contracting $u_2$ onto
$u_1$. Let $u_3$ be the neighbor of $u_1$ and $u_2$ different from $s$.
Since the edges incident to $u_2$ are contained into triangles
$\triangle_{s,u_1,u_2}$
and $\triangle_{u_1, u_2,u_3}$ during the contraction, planarity is
preserved (see Figs.~\ref{fig:paths-b} and~\ref{fig:paths-c}). Let $p_2'$ be the
path composed of edge $(u_1,u_3)$ and of the subpath of $p_2$ between $u_3$ and
$t$, and let $p_1'$ be the subpath of $p_1$ between $u_1$ and $t$.
Observe that $G'$ contains edge $(u_1,u_3)$ and does not contain vertex $u_2$.
\end{description}

Tree $T_e(G')$ is obtained from $T_e(G)$ by performing the
local changes described hereunder, with respect to the above cases.

\begin{description}

\item[{Case 1}.] Let $\tau_1$ and $\tau_2$
be the nodes of $T_e(G)$ corresponding to paths $p_1$ and $p_2$. Note that
$\tau_2$ is an S-node, as $v \in p_2$ and $v \neq s,t$. The two Q-nodes
that are children of $\tau_2$ and that correspond to edges $(v,v_1)$ and
$(v,v_2)$ are removed in $T_e(G')$.

\begin{figure}[htb]
\begin{center}
\subfigure[]{\includegraphics[scale=0.5]{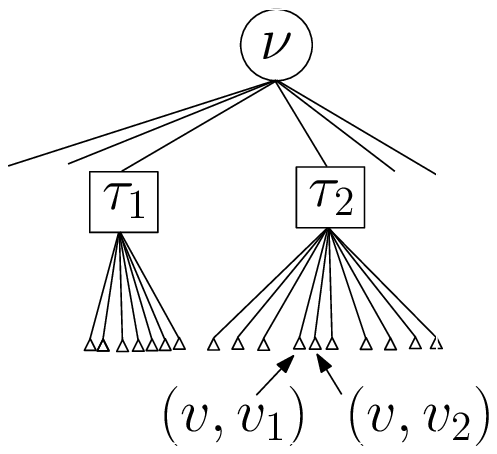}\label{fig:sp-tree1}}
    \hfill
\subfigure[]{\includegraphics[scale=0.5]{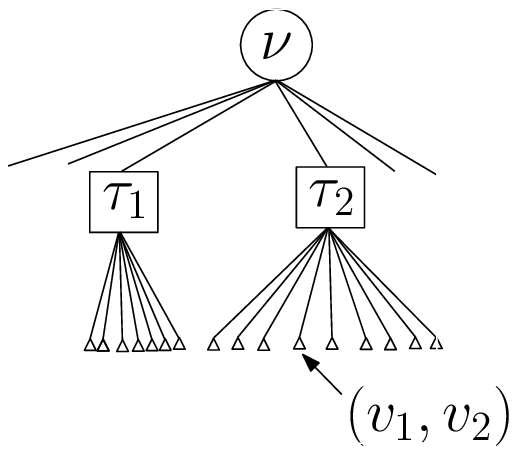}\label{fig:sp-tree2}}
\hfill
\subfigure[]{\includegraphics[scale=0.5]{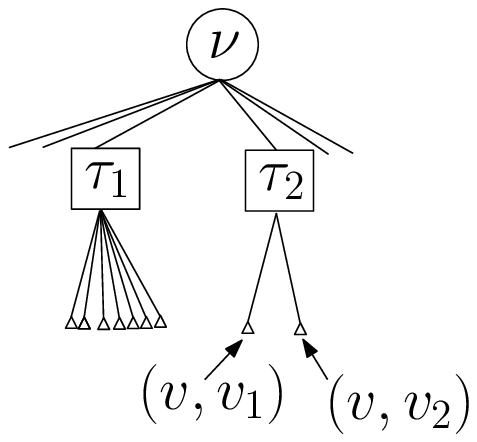}\label{fig:sp-tree3}}
\hfill
\subfigure[]{\includegraphics[scale=0.5]{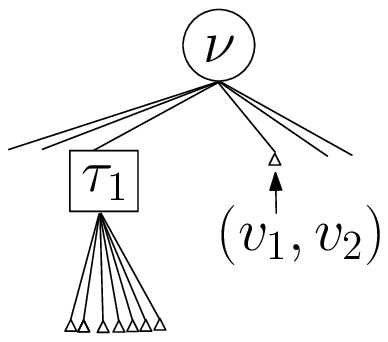}\label{fig:sp-tree4}}
\caption{Construction of $T_e(G')$ starting from $T_e(G)$ in \emph{Case 1}.
(a--b) $T_e(G)$ and $T_e(G')$, respectively, in \emph{Case 1.1}. (c--d) $T_e(G)$
and $T_e(G')$, respectively, in \emph{Case 1.3}. \label{fig:sp-tree}}
\end{center}
\end{figure}

\begin{description}
\item[\emph{Case 1.1})] A Q-node corresponding to $(v_1,v_2)$ is added as
a child of $\tau_2$ (see Figs.~\ref{fig:sp-tree1} and~\ref{fig:sp-tree2}).

\item[\emph{Case 1.2})] $\tau_2$ is removed from $T_e(G')$. Also, if $\nu$ has
no
children other than $\tau_1$ and $\tau_2$ in $T_e(G')$, then $\nu$ is replaced
with $\tau_1$ in $T_e(G')$.

\item[\emph{Case 1.3})] $\tau_2$ is replaced in $T_e(G')$ with a Q-node
corresponding to $(v_1,v_2)$ (see
Figs.~\ref{fig:sp-tree3} and~\ref{fig:sp-tree4}).
\end{description}

\item[{Case 2}.] Let $\tau_1$
and $\tau_2$ be the nodes of $T_e(G)$ corresponding to paths $p_1$ and $p_2$,
and let $\mu$ be the parent of $\nu$. Note that $\tau_1$ and $\tau_2$ are
S-nodes, as $u_1,u_2 \neq s,t$. First, the Q-nodes corresponding to edges
$(s,u_2)$ and $(u_2,u_3)$ are removed from the children of $\tau_2$, and a
Q-node $\nu_Q$ (corresponding to edge $(u_1,u_3)$) is added to $T_e(G')$.
We distinguish the cases in which $\nu$ has more than two children in $T_e(G)$
(\emph{Case 2.1})
and when $\nu$ has exactly two children in $T_e(G)$ (\emph{Case 2.2}).

\begin{description}
\item[\emph{Case 2.1})] An
S-node $\nu_S$ and a P-node $\nu_P$ are introduced in $T_e(G')$, in such a way
that (i) $\nu_S$ is a child of $\nu$, (ii) the Q-node corresponding to $(s,u_1)$
and $\nu_P$ are children of $\nu_S$, (iii) $\tau_1$ and $\tau_2$ are children of
$\nu_P$, and (iv) $\nu_Q$ is a child of $\tau_2$. See Figs.~\ref{fig:sp-tree5}
and~\ref{fig:sp-tree6}.

\begin{figure}[htb]
\begin{center}
\subfigure[]{\includegraphics[scale=0.45]{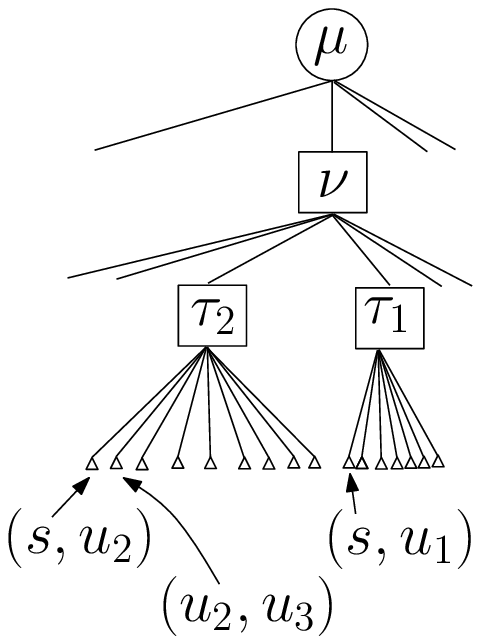}\label{fig:sp-tree5}}
\hfill
\subfigure[]{\includegraphics[scale=0.45]{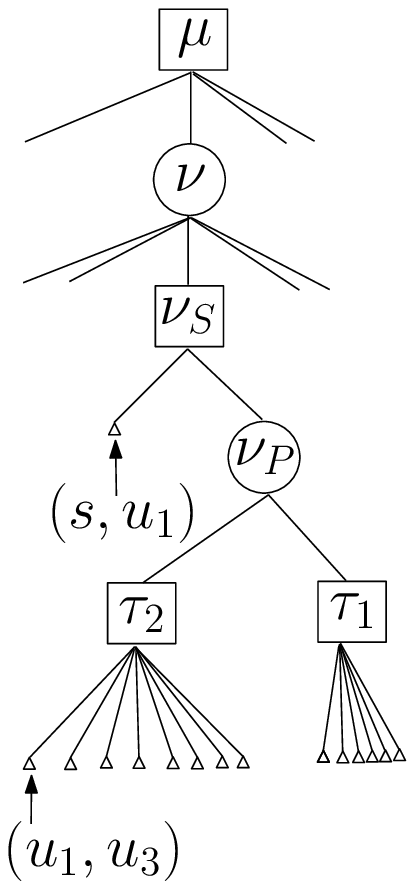}\label{fig:sp-tree6}}
\hfill
\subfigure[]{\includegraphics[scale=0.45]{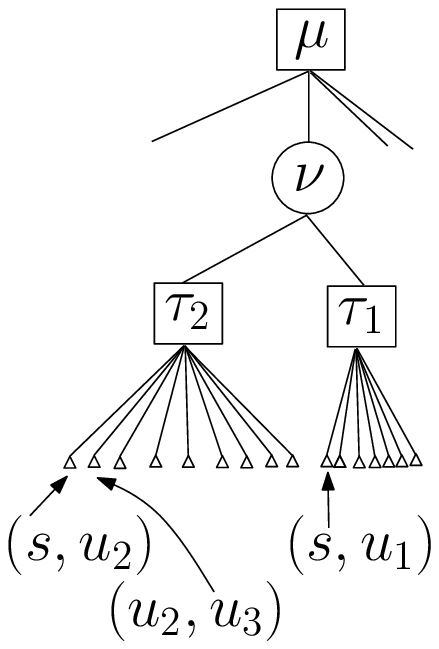}\label{fig:sp-tree7}}
\hfill
\subfigure[]{\includegraphics[scale=0.45]{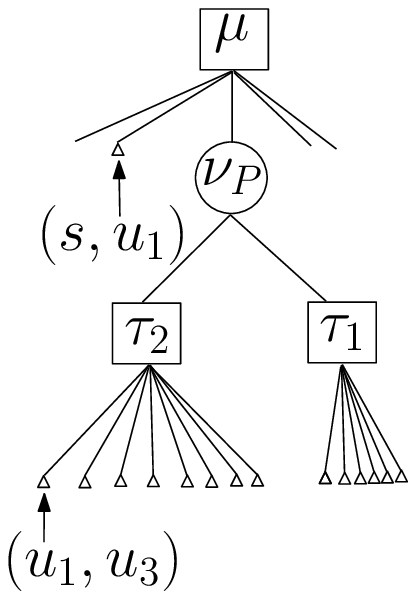}\label{fig:sp-tree8}}
\caption{Construction of $T_e(G')$ starting from $T_e(G)$ in \emph{Case 2}.
(a--b) $T_e(G)$ and $T_e(G')$, respectively, in \emph{Case 2.1}. (c--d) $T_e(G)$
and $T_e(G')$, respectively, in \emph{Case 2.2}.\label{fig:sptreebis}}
\end{center}
\end{figure}

\item[\emph{Case 2.2})] Node $\nu$ is
removed from the children of $\mu$, and a P-node $\nu_P$ is introduced in
$T_e(G')$ in such a way that (i) the Q-node corresponding to $(s,u_1)$ and
$\nu_P$ are children of $\mu$, (ii) $\tau_1$ and $\tau_2$ are children of
$\nu_P$, and (iii) $\nu_Q$ is a child of $\tau_2$. See Figs.~\ref{fig:sp-tree7}
and~\ref{fig:sp-tree8}.

\end{description}

\end{description}

%%%%
%%%%
%%%%
\subsection{Step 2: Recursive Call}

Let $\Gamma(G')$ be the drawing of the graph $G' = G \setminus \{v\}$ obtained
after the contraction of vertex $v$ performed in \emph{Case 1} or in
\emph{Case 2}.

Inductively construct a morphing from $\Gamma(G')$ to the canonical
drawing $\Gamma^*(G')$ of $G'$ in $c \cdot (n-1)$ steps, where $c$ is
a constant.

%%%%
%%%%
%%%%
\subsection{Step 3: Uncontract Vertex $v$ and Construct a Canonical Drawing of
$G$}

We describe how to obtain $\Gamma^*(G)$ from $\Gamma^*(G')$ by uncontracting $v$
and performing a constant number of morphing steps. The description follows the
cases discussed in Appendix~\ref{app:step-1}.

\begin{description}

\item[{Case 1} (there exists a vertex $v$ of degree $2$ different from $s$ and
$t$).] ~

\begin{description}

\item[\emph{Case 1.1})] This case is discussed in
Section~\ref{sse:biconnected-sp}.

\item[\emph{Case 1.2} and \emph{Case 1.3})] Note that $\Gamma^*(G')$ and
$\Gamma^*(G)$ coincide, except for the fact that: (i) $\Gamma^*(G)$ contains one
boomerang more than $\Gamma^*(G')$
(the one associated to $\tau_2$) inside the diamond associated to $\nu$, (ii)
$\Gamma^*(G)$ might not contain the diamond associated to the Q-node
corresponding to edge $(s,t)$ (in \emph{Case 1.3}), and (iii) the boomerangs
inside the diamond associated to $\nu$ have a different drawing in
$\Gamma^*(G')$ and $\Gamma^*(G)$. Drawing $\Gamma^*(G')$ is
illustrated in Fig.~\ref{fig:Case12-extraction-1}, drawing $\Gamma^*(G)$ in
\emph{Case 1.2} is illustrated in Fig.~\ref{fig:Case12-extraction-3}, drawing
$\Gamma^*(G)$ in \emph{Case 1.3} is illustrated in
Fig.~\ref{fig:Case12-extraction-4}.

Since edge $(v_1,v_2)$ exists in $G'$, its drawing in $\Gamma^*(G')$ is the
straight-line segment between the points $N'$ and $S'$ of a diamond
$N',E',S',W'$. Also, the drawing $\Gamma^*(p_2)$ of $p_2$ in $\Gamma^*(G)$ lies
inside a boomerang $N,E,S,W$ with $N=N'$ and $S=S'$.

\begin{figure}[htb]\label{fig:Case12}
\centering
% \resizebox{\linewidth}{!}{
\subfigure[]{\includegraphics[scale=0.53]{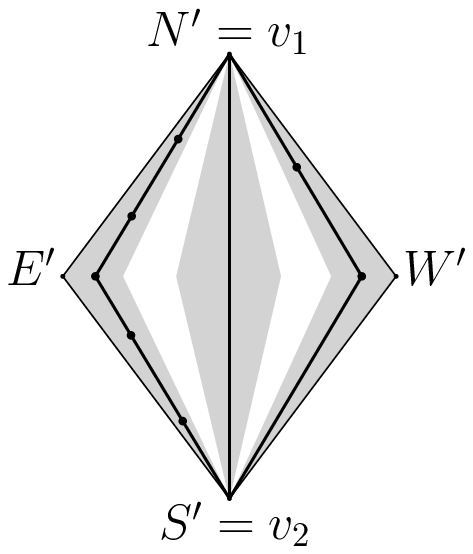}
\label{fig:Case12-extraction-1}}\hspace{.1em}
\subfigure[]{\includegraphics[scale=0.53]{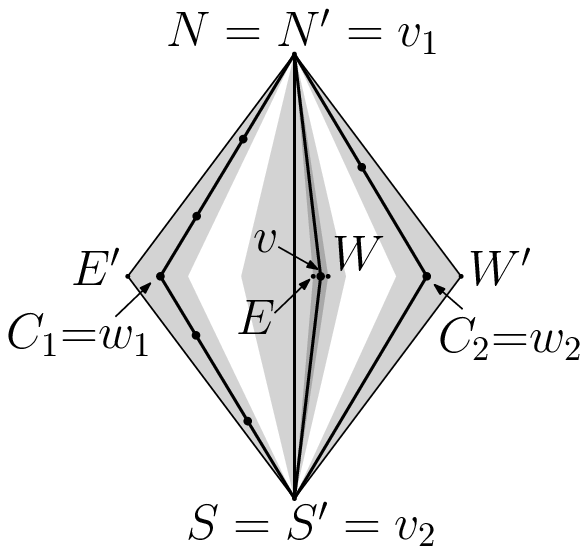}
\label{fig:Case12-extraction-2}}\hspace{.1em}
\subfigure[]{\includegraphics[scale=0.53]{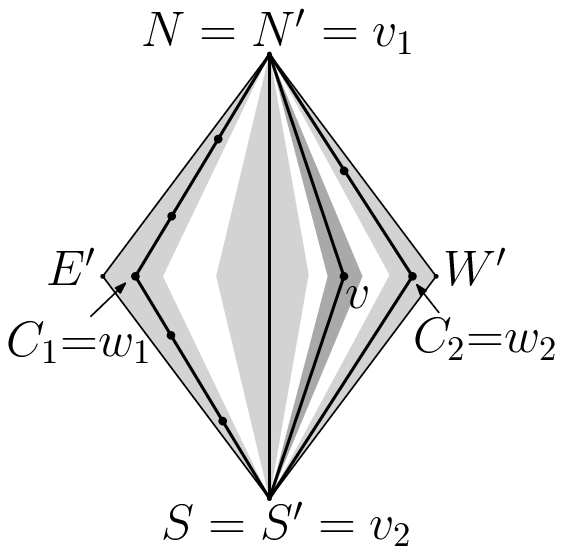}
\label{fig:Case12-extraction-3} }\hspace{.1em}
\subfigure[]{\includegraphics[scale=0.53]{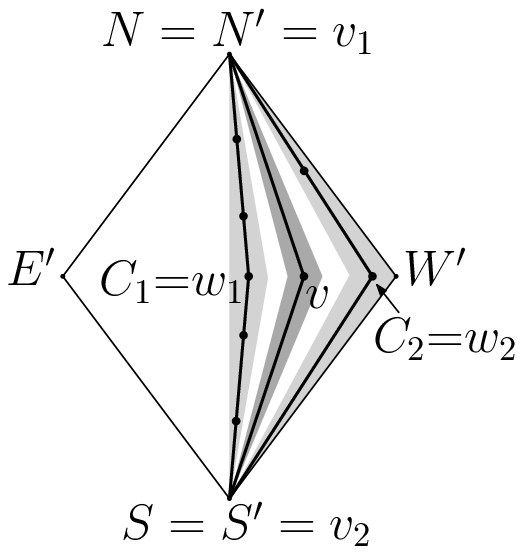}
\label{fig:Case12-extraction-4} }
% }
\caption{Construction of $\Gamma^*(G)$ from $\Gamma^*(G')$ when either
\emph{Case 1.2} or \emph{Case 1.3} applied. (a) $\Gamma^*(G')$. The
diamond associated to edge $(v_1,v_2)$ and the boomerangs associated to
the children $\tau_i$ of $\nu$ are light-grey. (b) Two points $W$ and $E$
are selected on $\overline{E'W'}$, creating a (dark gray) boomerang associated
to $\tau_2$, and vertex $v$ is moved to the midpoint of $\overline{EW}$. (c)
$\Gamma^*(G)$ in \emph{Case 1.2}, where edge $(v_1,v_2)$ exists in $G$. (d)
$\Gamma^*(G)$ in \emph{Case 1.3}.
}
\end{figure}

In order to construct a pseudo-morph from $\Gamma^*(G')$ to $\Gamma^*(G)$,
initially
place points $E$ and $W$ on segment $\overline{E'W'}$, on the same side with
respect to segment $\overline{N'S'}$ (in \emph{Case 1.2}, the side depends on
the order of the children of $\nu$ in $T_e(G)$). With one morphing step, move
$v$ to the midpoint of segment $\overline{EW}$ (see
Fig.~\ref{fig:Case12-extraction-2}).

Consider the children $\tau_i$ of $\nu$ in $T_e(G)$ that
are not Q-nodes, with $i=1,\dots,q$, and note that the drawing of each $\tau_i$
is composed of two straight-line segments $\overline{NC_i}$ and
$\overline{SC_i}$.
With a second morphing step, move the vertex $w_i$ of $\tau_i$ lying on $C_i$,
for each $i=1,\dots,q$, and vertex $v$ along the line through $\overline{EW}$
till reaching their positions in $\Gamma^*(G)$. In the same morphing step, for
each $i=1,\dots,q$, the vertices on the path between $s$ and $w_i$ are moved as
convex combination of the movements of $s$ and $w_i$, and the vertices on the
path between $t$ and $w_i$ are moved as linear combination of the movements of
$t$ and $w_i$. Hence, at the end of the morphing step, also such vertices reach
their positions in $\Gamma^*(G)$ (see
Figs.~\ref{fig:Case12-extraction-3} and~\ref{fig:Case12-extraction-4}).

\end{description}

\item[{Case 2} (the only two vertices of degree $2$ in $O$ are $s$ and $t$).] ~

\begin{description}

\item[\emph{Case 2.1})]
Note that $\Gamma^*(G')$ and
$\Gamma^*(G)$ coincide, except for the drawing of $p_1$, $p_2$, $p_1'$, and
$p_2'$.

Namely, $p_1$ and $p_2$ are drawn in $\Gamma^*(G)$ in two boomerangs
$N_1,E_1,S_1,W_1$ and $N_2=N_1,E_2,S_2=S_1,W_2$ lying inside the diamond
associated to $\nu$ (see Fig.~\ref{fig:Case21-extraction-finale}), while $p_1'$
and $p_2'$ are drawn in $\Gamma^*(G')$ in two boomerangs
$N_1',E_1',S_1'=S_1,W_1'$ and $N_2'=N_1',E_2',S_2'=S_1'=S_1,W_2'$ lying inside a
diamond associated to $\nu_P$, that lies inside a boomerang
$N_S,E_S,S_S,W_S$ associated to $\nu_S$ (with $S_S=S_2'=S_1'=S_1$), that lies
inside the diamond associated to $\nu$ (see
Fig.~\ref{fig:Case21-extraction-1}).

Note that, since $\nu_S$ has two children in $T_e(G')$, vertex $u_1$ is placed
on the midpoint $C_S$ of segment $\overline{E_SW_S}$, that is, $C_S=N_1'=N_2'$.

Let $w_1$ and $w_2$ be the vertices of $p_1'$ and $p_2'$, respectively, placed
on the midpoints $C_1'$ and $C_2'$ of segments $\overline{E_1'W_1'}$ and
$\overline{E_2'W_2'}$.

\begin{figure}[htb]
\centering
% \resizebox{\linewidth}{!}{
\subfigure[]{\includegraphics[scale=0.62]{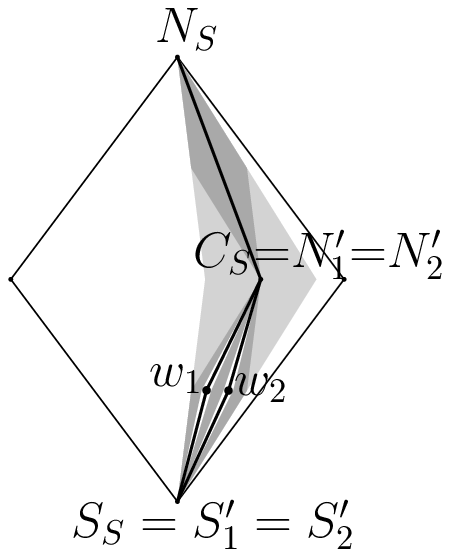}
\label{fig:Case21-extraction-1}}\hfill
\subfigure[]{\includegraphics[scale=0.62]{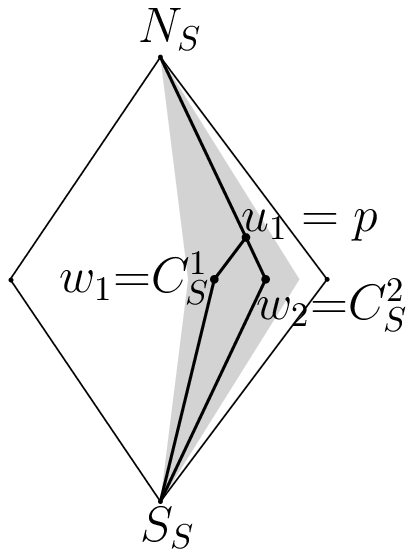}
\label{fig:Case21-extraction-2}}\hfill
\subfigure[]{\includegraphics[scale=0.62]{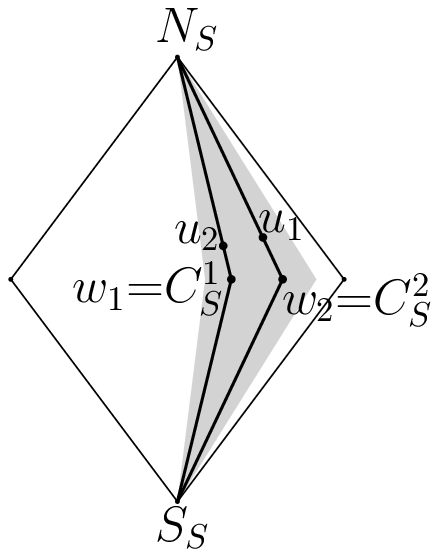}
\label{fig:Case21-extraction-3} }\hfill
\subfigure[]{\includegraphics[scale=0.62]{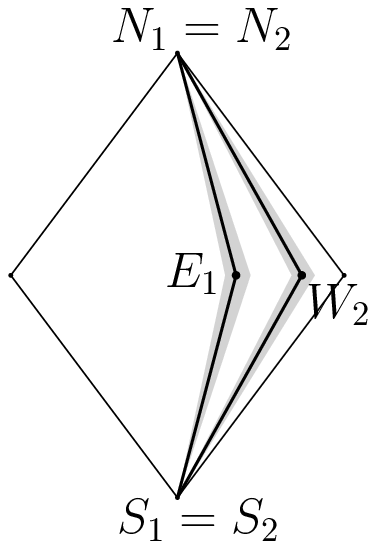}
\label{fig:Case21-extraction-finale}}
% }
\caption{Construction of $\Gamma^*(G)$ from $\Gamma^*(G')$ when \emph{Case
$2.1$} applied. (a) $\Gamma^*(G')$. The boomerang associated to $\nu_S$ is
light-grey, the diamonds associated to $\nu_P$ and to the Q-node corresponding
to $(s,u_2)$ are dark-grey, and the boomerangs associated to $\tau_1$ and
$\tau_2$ are white. (b) Vertices $w_1$ and $w_2$ are moved to points $C_S^1$ and
$C_S^2$, and $u_1$ is moved to a point of $\overline{N_SC_S^2}$. (c) Vertex $v$
is uncontracted from $u_2$ and moved to a point of $\overline{N_SC_S^1}$. (d)
$\Gamma^*(G)$. The boomerangs associated to $\tau_1$ and $\tau_2$ are
light-grey.\label{fig:Case21}}
\end{figure}

With one morphing step, move $w_1$ to any point $C_S^1$ on $\overline{E_SC_S}$,
move $w_2$ to any point $C_S^2$ on $\overline{C_SW_S}$, and move $u_1$ to any
point on segment $\overline{N_S C_S^2}$ (see
Fig.~\ref{fig:Case21-extraction-2}). In the same morphing step, for each
two vertices in $w_1$, $w_2$, $u_1$, and $t$, say $w_1$ and $t$, the vertices
lying on segment $\overline{w_1t}$ are moved as linear combination of the
movements of $w_1$ and $t$. Hence, at the end of the morphing step, all these
vertices still lie on $\overline{w_1t}$.

Next, with one morphing step, uncontract $u_2$ from $u_1$ and move it to any
internal point
on segment $\overline{N_S C_S^1}$ (see Fig.~\ref{fig:Case21-extraction-3}). In
the same morphing step, the vertices lying on segment $\overline{u_2 w_1}$ are
moved as linear combination of the movements of $u_2$ and $w_1$.

Further, perform the same operation as in \emph{Case 1.1} to redistribute the
vertices of $p_1$ on $\overline{N_SC_S^1}$ and $\overline{S_SC_S^1}$, and the
vertices of $p_2$ on $\overline{N_SC_S^2}$ and $\overline{S_SC_S^2}$. After
this step, for each child $\tau_i$ of $\nu$, the vertex $w_i$ of $\tau_i$
lying on segment $\overline{E_SW_S}$ in $\Gamma^*(G)$ lies on
$\overline{E_SW_S}$ also in the current drawing.

Finally, perform the same operation as in \emph{Case 1.2} to move the
vertex $w_i$ of each child $\tau_i$ of $\nu$ to its final position (on segment
$\overline{E_SW_S}$) in $\Gamma^*(G)$. In the same morphing step, the vertices
on the path between $s$ and $w_i$ are moved as linear combination of the
movements of $s$ and $w_i$, and the vertices on the path between $t$ and $w_i$
are moved as linear combination of the movements of $t$ and $w_i$. Hence, at the
end of the morphing step, also such vertices reach their positions in
$\Gamma^*(G)$.

\item[\emph{Case 2.2})] Note that $\Gamma^*(G')$ and
$\Gamma^*(G)$ coincide, except for the drawing of $p_1$, $p_2$, $p_1'$, and
$p_2'$.

Namely, $p_1$ and $p_2$ are drawn in $\Gamma^*(G)$ in two boomerangs
(associated to $\tau_1$ and $\tau_2$) lying inside the diamond associated to
$\nu$ (see Fig.~\ref{fig:Case22-extraction-finale}), that lies inside the
boomerang associated to $\mu$. Also $p_1'$ and $p_2'$ are drawn in
$\Gamma^*(G')$ in two boomerangs lying inside a diamond (associated to $\nu_P$)
that lies inside the boomerang associated to $\mu$ (see
Fig.~\ref{fig:Case22-extraction-1}). However, the boomerang associated to $\mu$
in $\Gamma^*(G)$ has one diamond less than in $\Gamma^*(G')$, since in
$\Gamma^*(G')$ it also contains the diamond associated to edge $(s,u_1)$. Also,
the vertices in the boomerangs associated to $\tau_1$ and $\tau_2$ have
different positions in $\Gamma^*(G')$ and in $\Gamma^*(G)$, since vertex $u_2$
is not present in $\Gamma^*(G')$.

With three morphing steps analogous to those performed in \emph{Case 1.1}, we
redistribute the vertices inside the boomerang $N,E,S,W$ associated to $\mu$ in
such a way that the vertex lying on the midpoint $C$ of $\overline{EW}$ is the
same in $\Gamma^*(G')$ and in $\Gamma^*(G)$. Note that, after these steps, the
diamonds associated to $\nu_P$ and to edge $(s,u_1)$ lie on the
same segment, either $\overline{NC}$ or $\overline{SC}$, say $\overline{SC}$,
and that the vertices lying on segment $\overline{NC}$ already are at their
final position in $\Gamma^*(G)$. Then, with three morphing steps analogous to
those performed in \emph{Case 2.1}, we uncontract $u_2$ and collapse the two
diamonds associated to $\nu_P$ and to $(s,u_1)$ into a single diamond. Then,
with one morphing step (analogous to one of the steps performed in \emph{Case
$1.1$}), we move the vertices lying on segment $\overline{SC}$ till they
reach their final position in $\Gamma^*(G)$.

\begin{figure}[htb]\label{fig:Case22}
\centering
% \resizebox{\linewidth}{!}{
\subfigure[]{\includegraphics[scale=0.62]{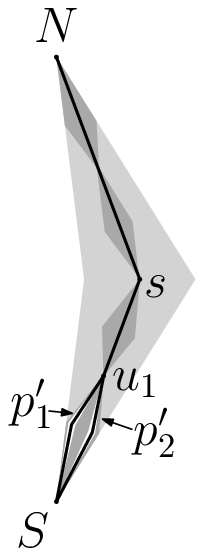}
\label{fig:Case22-extraction-1}}\hspace{3em}
\subfigure[]{\includegraphics[scale=0.62]{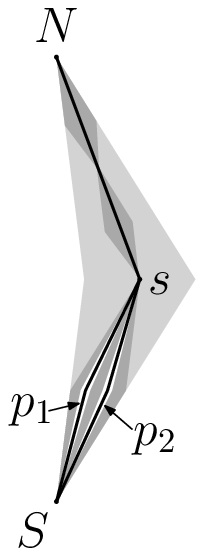}
\label{fig:Case22-extraction-finale}}
% }
\caption{Construction of $\Gamma^*(G)$ from $\Gamma^*(G')$ when \emph{Case
$2.2$} applied. (a) $\Gamma^*(G')$. The boomerang associated to $\mu$ is
light-grey, the diamonds associated to the children of $\mu$, including $\nu_P$
and the Q-node corresponding to $(s,u_1)$ are dark-grey, and the boomerangs
associated to $\tau_1$ and $\tau_2$ are white. (b) $\Gamma^*(G)$.}
\end{figure}

\end{description}

\end{description}

\subsection{Simply-Connected Series-Parallel
Graphs}\label{sse:simply-connected-sp}

In this section we show how, by preprocessing the input drawings $\Gamma_a$ and
$\Gamma_b$ of any series-parallel graph $G$, the algorithm presented in
\Cref{sse:biconnected-sp} can be used to compute a pseudo-morph $M=$\mmorph{\Gamma_a,\dots,\Gamma_b}. The idea is to augment both $\Gamma_a$ and
$\Gamma_b$ to two drawings $\Gamma_a'$ and $\Gamma_b'$ of a biconnected
series-parallel graph $G'$, compute the morph
$M'=$\mmorph{\Gamma'_a,\dots,\Gamma'_b}, and obtain $M$ by restricting $M'$ to
the vertices and edges of $G$.

This augmentation is performed on $G$ by repeatedly applying the following
lemma.

\begin{lemma}\label{le:simplebico}
Let $v$ be a cut-vertex of a plane series-parallel graph $G$ with $n_b$
blocks. Let $e_1=(u,v)$ and $e_2=(w,v)$ be two consecutive edges in the circular
order around $v$ such that $e_1$ belongs to block $b_1$ of
$G$ and $e_2$ belongs to block $b_2 \neq b_1$ of $G$.
The graph $G^*$ obtained from $G$ by adding a vertex $z$ and edges $(u,z)$ and
$(w,z)$ is a plane series-parallel graph with $n_b -1$ blocks.
\end{lemma}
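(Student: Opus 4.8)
The plan is to verify the three asserted properties of $G^*$ separately: that it is plane, that it has exactly $n_b-1$ blocks, and that it is series-parallel. Write $b_1$ and $b_2$ for the blocks containing $e_1=(u,v)$ and $e_2=(w,v)$. First note $u\neq w$ and $(u,w)\notin G$, since $u$ and $w$ lie in distinct blocks and an edge or identification between them would force $b_1=b_2$; hence $G^*$ stays simple. For planarity I would use that $e_1$ and $e_2$ are consecutive in the rotation at $v$: they bound a common face $f$, and following $\partial f$ from $v$ along $e_1$ reaches $u$ while following it along $e_2$ reaches $w$, so both $u$ and $w$ lie on $\partial f$. Placing $z$ in the interior of $f$ and routing $(u,z)$ and $(z,w)$ inside $f$ creates no crossing and extends the given embedding, so $G^*$ is plane. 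Consecutiveness of $e_1,e_2$ is used only here; the remaining arguments need only $b_1\neq b_2$.

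For the block count, observe that $(v,u),(u,z),(z,w),(w,v)$ form a $4$-cycle, so $e_1$ and $e_2$ now lie on a common cycle and therefore belong to the same block of $G^*$. Since lying on a common cycle is precisely the equivalence relation whose classes are the (nontrivial) blocks, the edge sets of $b_1$ and $b_2$, together with $(u,z)$ and $(z,w)$, coalesce into a single block $B$ that also contains $z$; every other block of $G$ is disjoint from the three added edges and is unaffected. Thus the two blocks $b_1,b_2$ are replaced by the single block $B$, and the number of blocks drops by exactly one, to $n_b-1$.

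It remains to show $G^*$ is series-parallel, for which, by the block-wise definition, it suffices to show that $B$ is a biconnected series-parallel graph: the untouched blocks are biconnected series-parallel because $G$ is, and $G^*$ is connected. Here I would invoke the rerooting remark from the preliminaries. Since $b_1$ is a biconnected series-parallel graph containing the edge $(u,v)$, rerooting its decomposition tree at the $Q$-node of $(u,v)$ lets me regard $b_1$ as a two-terminal series-parallel graph with terminals $u,v$ in which $(u,v)$ is a top-level parallel component (the degenerate case where $b_1$ is the single edge $(u,v)$ fits trivially); likewise $b_2$ is a two-terminal series-parallel graph with terminals $w,v$. The path $u,z,w$ is the series composition of $(u,z)$ and $(z,w)$, a two-terminal series-parallel graph with terminals $u,w$; composing it in series with $b_2$ at $w$ yields a two-terminal series-parallel graph with terminals $u,v$, and composing this in parallel with $b_1$ yields exactly $B$, still a two-terminal series-parallel graph with terminals $u,v$ and still having $(u,v)$ as a top-level parallel component. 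Hence $B-(u,v)$ is two-terminal series-parallel with terminals $u,v$, so $B=\big(B-(u,v)\big)+(u,v)$ is a biconnected series-parallel graph with root edge $(u,v)$, and $G^*$ is series-parallel.

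I expect the crux to be the last paragraph: correctly viewing $b_1$ and $b_2$ as two-terminal series-parallel graphs with the right terminals, which is exactly where the rerooting observation is essential, and then assembling the series and parallel compositions so that the merged block $B$ appears as a two-terminal series-parallel graph plus the root edge $(u,v)$. The planarity and block-counting arguments are routine once $u$ and $w$ are seen to share the face $f$ and once the new $4$-cycle through $z$ is identified.
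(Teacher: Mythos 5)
Your proof is correct, but the core of it---establishing that $G^*$ is series-parallel---follows a genuinely different route from the paper's. The block-counting step is identical in substance: both you and the paper observe that the $4$-cycle $v,u,z,w$ puts $e_1$ and $e_2$ on a common cycle, so $b_1$ and $b_2$ merge into one block and no other block is affected (and both treatments are equally terse about why nothing else merges). For series-parallelness the paper argues by contradiction via the forbidden-structure characterization: if $G^*$ were not series-parallel it would contain a subdivision of $K_4$, whose four branch vertices must lie in the merged block, cannot include the degree-$2$ vertex $z$, and hence would force the obstruction to lie already in $b_1$, already in $b_2$, or across the cut-vertex $v$---all impossible. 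You instead argue constructively from the paper's recursive definition: rerooting the decomposition trees of $b_1$ and $b_2$ at the $Q$-nodes of $(u,v)$ and $(w,v)$, you exhibit the merged block explicitly as the parallel composition of $b_1$ with the series composition of the path $u,z,w$ and $b_2$, i.e., as a two-terminal series-parallel graph with terminals $u,v$ and root edge $(u,v)$. Your version is a little longer but stays entirely inside the definitions the paper actually gives (the $K_4$-subdivision equivalence is invoked but never established there), and it yields as a by-product an explicit decomposition tree of the augmented block, which is the object the algorithm of \Cref{sse:biconnected-sp} consumes; you are also more explicit than the paper about the plane embedding, placing $z$ in the face shared by the consecutive edges $e_1$ and $e_2$, which the paper defers to the remark following the lemma. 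The only slip is cosmetic: two edges are added, not three.
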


\begin{proof}
First, observe that by adding $z$, $(u,z)$, and $(w,z)$ to $G$, blocks $b_1$
and $b_2$ are merged together into a single block $b_{1,2}$ of $G^*$ (see
Figs.~\ref{fig:block_simply} and~\ref{fig:edges_simply}). Hence, the number of
blocks of $G^*$ is $n_b - 1$. It remains to show that $G^*$ is a
series-parallel graph.

\begin{figure}[h]
\centering
~\hfill
\subfigure[]{\includegraphics[scale=.5]{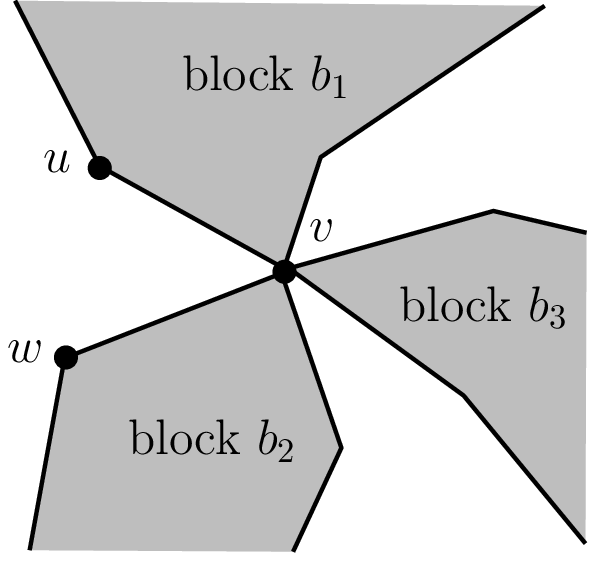}\label{fig:block_simply}}
\hfill
%\subfigure[]{\includegraphics[scale=.8]{block_pears}\label{fig:pears_simply}}
\hfill
\subfigure[]{\includegraphics[scale=.5]{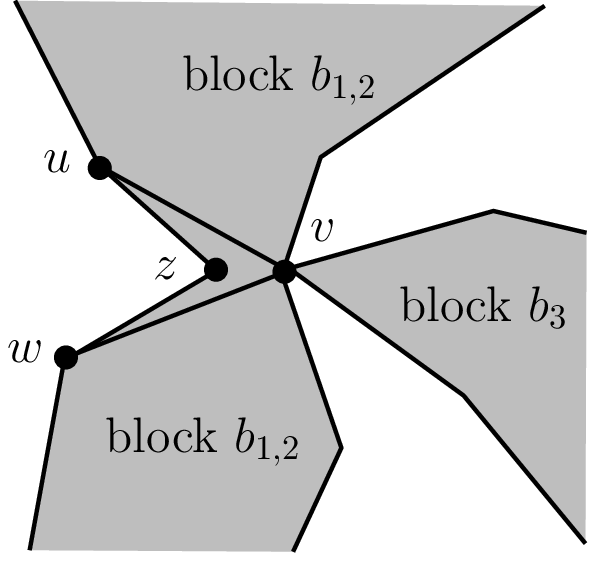}\label{fig:edges_simply}}
\hfill~
%\subfigure[]{\includegraphics[scale=.8]{edges_pears}\label{fig:edges_pears}}
 \caption{(a) A cut-vertex $v$ of a series-parallel graph $G$.
(b) Graph $G'$ obtained by adding vertex $z$ and edges  $(u,z)$, and $(w,z)$ is
a series-parallel graph.\label{fig:simply}}
\end{figure}

Assume for a contradiction that $G^*$ is not a series-parallel graph.
It follows that $G^*$ contains a subdivision of the complete graph on four
vertices $K_4$, i.e., there is a set $V_{K_4}$ of four vertices of $G^*$ such
that any two of them are joined by three vertex-disjoint paths. Observe that the
vertices in $V_{K_4}$ cannot belong to different blocks of $G^*$. Further, since
$G$ is a series-parallel graph, the vertices in $V_{K_4}$ belong to $b_{1,2}$.
Since $z$ has degree two, $z \notin V_{K_4}$; hence, the vertices in $V_{K_4}$
are also vertices of $G$. This gives a contradiction since: (i) The vertices in
$V_{K_4}$ cannot all belong to $b_1$, as otherwise $G$ would not be
series-parallel, contradicting the hypothesis; (ii) the vertices in $V_{K_4}$
cannot all belong to $b_2$, for the same reason; and (iii) the vertices in
$V_{K_4}$ cannot belong both to $b_1$ and $b_2$, as otherwise there could
not exist three vertex-disjoint paths joining them in $G^*$, contradicting the
hypothesis that $G^*$ contains a subdivision of $K_4$.
\end{proof}

Observe that, when augmenting $G$ to $G^*$, both $\Gamma_a$ and $\Gamma_b$ can
be augmented to two planar straight-line drawings $\Gamma^*_a$ and $\Gamma^*_b$
of $G^*$ by placing vertex $z$ suitably close to $v$ and with direct visibility
to vertices $u$ and $w$, as in the proof of F\'{a}ry's
Theorem~\cite{fary-slrpg-48}.
By repeatedly applying such an augmentation we obtain a biconnected
series-parallel graph $G'$ and its drawings $\Gamma_a'$ and $\Gamma_b'$, whose
number of vertices and edges is linear in the size of $G$. Hence,
the algorithm described in \Cref{sse:biconnected-sp} can be applied to obtain
a pseudo-morph \mmorph{\Gamma_a,\dots,\Gamma_b}, thus proving
\Cref{th:sp-pseudo-morphing}. We will show in Section~\ref{se:geometry} how to
obtain a morph starting from the pseudo-morph computed in this section.

\section{Morphing Plane Graph Drawings in $\mathbf{O(n^2)}$
Steps}\label{se:general_case}

In this section we prove the following theorem.

\begin{theorem}\label{th:pseudo-general_case}
Let $\Gamma_s$ and $\Gamma_t$ be two drawings of the same plane graph $G$.
There exists a pseudo-morph \mmorph{\Gamma_s,\dots,\Gamma_t} with $O(n^2)$
steps transforming $\Gamma_s$ into $\Gamma_t$ .
\end{theorem}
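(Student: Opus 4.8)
The plan is to mirror the recursive pseudo-morph framework of \Cref{sse:biconnected-sp}, but to apply it directly to $\Gamma_s$ and $\Gamma_t$ rather than through a canonical drawing, since general plane graphs admit no canonical form. Concretely, I would (i) select a single edge $(u,v)$ of $G$ and contract $v$ onto $u$ in \emph{both} input drawings, obtaining drawings $\Gamma_s(G')$ and $\Gamma_t(G')$ of the same plane graph $G'=G/(u,v)$ with $n-1$ vertices; (ii) recursively build a pseudo-morph from $\Gamma_s(G')$ to $\Gamma_t(G')$; and (iii) wrap the recursion with the contraction performed on $\Gamma_s$ at the front and, at the end, the uncontraction of $v$ from $u$ followed by the reversal of the preparatory morph performed on $\Gamma_t$. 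The base case $n\le 3$ is immediate, as any two drawings of a plane graph on at most three vertices can be morphed in a single linear step.

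The choice of the edge to contract must depend only on $G$ and its embedding, so that the \emph{same} contraction is applied to $\Gamma_s$ and to $\Gamma_t$ and the two contracted drawings realise one and the same plane graph $G'$. I would invoke the standard fact that every simple planar graph has a vertex $v$ with $\deg(v)\le 5$, and that $v$ admits a neighbor $u$ sharing at most two common neighbors with it, so that contracting $v$ onto $u$ produces a simple plane graph $G'$ on $n-1$ vertices (the at most two pairs of parallel edges created by common neighbors being identified). Because this selection is purely combinatorial, applying it to the two inputs yields $\Gamma_s(G')$ and $\Gamma_t(G')$ as drawings of the same plane graph, which is exactly what the recursion requires.

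It remains to realise each contraction geometrically. In a given drawing, the contraction of $v$ onto $u$ is a planarity-preserving linear step as soon as $u$ lies in the closure of the kernel of $v$, in the sense of \Cref{se:preliminaries}: that kernel is a convex region, so if both the current position of $v$ and the point $u$ lie in it, the straight-line motion taking $v$ to $u$ keeps every edge incident to $v$ crossing-free, after which we perform the combinatorial contraction. In general $u$ need not be in this kernel, so I would first run a preparatory pseudo-morph, namely the \conv applied to the cycle bounding the faces around $v$, reshaping that cycle until $u$ becomes visible from all the remaining neighbors of $v$ and hence the kernel grows to contain $u$. Writing $f(n)$ for the number of steps of the whole pseudo-morph, the preparation on $\Gamma_s$ and on $\Gamma_t$ contributes $O(n)$ steps each, the contraction and uncontraction a further $O(1)$, and the recursion contributes $f(n-1)$; thus $f(n)=f(n-1)+O(n)=O(n^2)$, as claimed.

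The crux of the argument, and the step I expect to be hardest, is the geometric preparation above. Although $v$ has only $O(1)$ neighbors, the vertices of the cycle surrounding it are tied to the rest of the drawing, so the reshaping cannot be performed locally: the \conv must move those vertices into a position from which $u$ is kernel-visible while keeping the \emph{entire} drawing planar at all times, and the number of linear steps this takes must be shown to be $O(n)$ rather than larger. A secondary difficulty is to guarantee that the single, combinatorially chosen pair $(u,v)$ is simultaneously realisable in both $\Gamma_s$ and $\Gamma_t$; this is precisely what forces us to achieve contractibility through a preparatory morph, which always succeeds, rather than by assuming that $u$ already lies in the kernel. Finally, the construction yields only a pseudo-morph; converting it into a genuine morph with the same asymptotic number of steps is deferred to \Cref{se:geometry}, exactly as is done for the series-parallel case underlying \Cref{th:sp-pseudo-morphing}.
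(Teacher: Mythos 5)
Your high-level plan coincides with the paper's: choose a low-degree vertex combinatorially so that the same contraction applies to both drawings, spend $O(n)$ steps on each of $\Gamma_s$ and $\Gamma_t$ to make that contraction geometrically feasible, contract in both, recurse, and solve $f(n)=f(n-1)+O(n)\in O(n^2)$. However, there are genuine gaps exactly at the two points where the paper does its technical work. The first is the selection criterion. Bounding the degree of $v$ and the number of common neighbors of $u$ and $v$ does not rule out \emph{chords}, i.e., edges between non-consecutive neighbors of $v$, nor does it guarantee that the lens bounded by the two parallel copies of an edge $(u,w)$ created by the contraction is an empty face. If a chord $(u_1,u_2)$ joins non-consecutive neighbors of $v$, contracting $v$ onto a neighbor on one side of that chord forces an edge to a neighbor on the other side to cross it, so no planar drawing of $G'$ with the inherited embedding results; and if $w$ is a common neighbor of $u$ and $v$ with the triangle on $u,v,w$ not an empty face, identifying the two parallel $(u,w)$ edges would have to collapse a region containing other vertices. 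The paper's \Cref{le:qcv_exists} exists precisely to exclude these configurations: it supplies a \emph{quasi-contractible} vertex (degree at most five, empty incident faces, no edges between non-consecutive neighbors). You need this lemma or an equivalent; a minimum-degree vertex with few common neighbors does not suffice.

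The second gap is in the geometric preparation, which you rightly flag as the crux but do not actually carry out. \conv accepts only a set of \emph{at most five} vertices inducing a biconnected outerplane graph with no other vertex in its interior, whereas ``the cycle bounding the faces around $v$'' can have $\Theta(n)$ vertices when those faces are not triangles, and the union of the faces around $v$ has $v$ itself (and possibly further vertices) in its interior. So \conv cannot be invoked as you propose, and your $O(n)$ term per level is unjustified. The paper's device (\Cref{sse:algo_gencase}) is: for each consecutive pair $a,b$ of neighbors of $v$ whose triangle \mpoly{a,b,v} is not empty, insert a dummy vertex $u$ close to $v$, triangulate, run \conv on the \emph{four} vertices $a,v,b,u$ to convexify that quadrilateral, then trade edge $(u,v)$ for edge $(a,b)$; after at most $\deg(v)\le 5$ such rounds plus one final application of \conv to the now-triangulated neighborhood, $v$ becomes $x$-contractible for every neighbor $x$, in $O(n)$ total steps. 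A minor further point: your base case is false as stated, since a single linear morph between two drawings of even a two- or three-vertex graph can make vertices collide (e.g.\ under a rotation by $\pi$); a constant number of steps still suffices there, so this is fixable, unlike the two gaps above.
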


\subsubsection{Preliminary definitions}
Let $\Gamma$ be a planar straight-line drawing of a plane graph $G$. A face $f$
of $G$ is \emph{empty} in $\Gamma$ if it is delimited by a simple cycle.
Consider a vertex $v$ of $G$ and let $v_1$ and $v_2$ be two of its neighbors.
Vertices $v_1$ and $v_2$ are \emph{consecutive} neighbors of $v$ if no edge
appears between edges $(v,v_1)$ and $(v,v_2)$ in the circular order of the edges
around $v$ in $\Gamma$.
Let $v$ be a vertex with $\deg(v) \leq 5$ such that each face containing
$v$ on its boundary is empty. We say that $v$ is \emph{contractible}
\cite{aacdfl-mpgdpns-13-c} if, for each two neighbors $u_1$ and $u_2$ of $v$,
edge $(u_1,u_2)$ exists in $G$ \emph{if and only if} $u_1$ and $u_2$ are
consecutive neighbors of $v$. We say that $v$ is \emph{quasi-contractible} if, for
each two neighbors $u_1$ and $u_2$ of $v$, edge $(u_1,u_2)$ exists in $G$
\emph{only if} $u_1$ and $u_2$ are consecutive neighbors of $v$. In other words,
no edge exists between non-consecutive neighbors of a contractible or quasi-contractible vertex; also, each face incident to a contractible vertex $v$ is delimited by a $3$-cycle, while a face incident to a quasi-contractible vertex might have more than three incident vertices. We have the following.

\begin{lemma}\label{le:qcv_exists}
Every planar graph contains a \qcv vertex.
\end{lemma}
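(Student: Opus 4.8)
The plan is to reduce the statement to three conditions on a candidate vertex $v$: that $\deg(v)\le 5$, that every face incident to $v$ is a simple cycle, and that no two non-consecutive neighbours of $v$ are adjacent. The second condition is the delicate one: it fails, for instance, at every vertex of a triangle with a single pendant edge, whose outer face is not bounded by a simple cycle, so the statement is really meant for a $2$-connected $G$ (or after first passing to one), since in a $2$-connected plane graph every face is bounded by a simple cycle and the face condition then holds automatically for \emph{all} vertices. Granting this, I would obtain the degree and the non-adjacency conditions by passing to a triangulation and locating a suitable vertex there.

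Concretely, I would first augment $G$ to a maximal plane graph $G^+$ on the same vertex set by inserting chords inside each face until every face is a triangle; note that $G\subseteq G^+$ and $\deg_G(v)\le\deg_{G^+}(v)$ for every $v$. I would then find a \emph{contractible} vertex of $G^+$, that is, a vertex of degree at most $5$ whose neighbours, in their cyclic order $u_1,\dots,u_d$ around $v$, form an induced cycle with no chords. If $G^+$ has no separating triangle, then a chord of any link would itself be a separating triangle, so every link is an induced cycle, and Euler's formula (average degree below $6$) supplies a vertex of degree $\le 5$. Otherwise I would choose a separating triangle $T$ enclosing the fewest vertices and work inside the near-triangulation $H$ formed by $T$ and its interior. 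A short count shows the interior contains a low-degree vertex: $H$ has some $k\ge 1$ interior vertices, hence $3k+3$ edges and degree sum $6k+6$, while its three outer vertices have degree sum at least $7$ (they must reach the interior, as $k\ge 1$); thus the interior degrees sum to less than $6k$, and some interior vertex $v$ has degree $\le 5$. Since $v$ lies strictly inside $T$, this is also its degree in $G^+$. Minimality of $T$ then forbids any chord $u_iu_j$ of the link of $v$, since $vu_iu_j$ would be a separating triangle strictly inside $T$ with fewer interior vertices. Hence $v$ is contractible in $G^+$.

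Finally I would transfer $v$ back to $G$. We have $\deg_G(v)\le\deg_{G^+}(v)\le 5$, and the cyclic order of the $G$-neighbours of $v$ is the restriction of the link cycle of $G^+$. If two $G$-neighbours $u_1$ and $u_2$ were adjacent in $G$, they would be adjacent in $G^+$, hence consecutive on the link cycle (no chords), and therefore—both being $G$-neighbours—consecutive among the $G$-neighbours of $v$; contrapositively, non-consecutive $G$-neighbours are non-adjacent, which is exactly the ``only if'' condition. Together with the face condition secured above, this makes $v$ quasi-contractible. I expect the main obstacle to be precisely the face-emptiness condition: unlike the degree and chord conditions it is \emph{not} inherited from $G^+$ and genuinely requires a connectivity hypothesis, as the pendant-triangle example shows, so the crux of a fully rigorous argument is either to reduce cleanly to the $2$-connected case or to pin down the exact hypothesis under which the lemma is invoked.
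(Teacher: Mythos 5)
Your proof follows essentially the same route as the paper's: augment the drawing to a maximal plane graph, locate a \emph{contractible} vertex there, and observe that the degree bound and the no-chord condition are inherited by the subgraph $G$ because a vertex of $G$ has at most the incident edges it has in the triangulation. The one substantive difference is self-containedness: where the paper simply cites Alamdari \emph{et al.}~\cite{aacdfl-mpgdpns-13-c} for the existence of a contractible vertex in a maximal plane graph, you re-derive it via an innermost separating triangle and an Euler-formula count; that argument is correct and makes the lemma independent of the citation. (The paper also first encloses $\Gamma$ in a triangle $a,b,c$ so that the contractible vertex it finds is guaranteed to lie in $G$; you avoid this by triangulating on the same vertex set, though you should then add a word about graphs with fewer than three vertices.) Your hesitation over the face-emptiness clause is well taken, but it is a criticism of the definition rather than a defect of your argument relative to the paper's: the paper's own proof likewise verifies only the adjacency condition and never checks that every face incident to $v$ is bounded by a simple cycle in $G$ itself, and indeed for a tree no vertex satisfies that clause, so the lemma would be false under a literal reading. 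The way quasi-contractibility is used later (a degree-$1$ vertex is declared contractible onto its unique neighbour, and the $x$-contractibility procedure never invokes face emptiness) shows the authors intend the notion to consist only of the degree bound and the ``only if'' adjacency condition; under that reading your argument, like the paper's, is complete without any reduction to the $2$-connected case.
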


\begin{proof}
Let $\Gamma$ be a planar drawing of a graph $G$.
Add vertices $a$, $b$,
and $c$ so that the triangle composed by these vertices completely encloses
$\Gamma$, and augment the obtained drawing to the drawing $\Gamma'$ of a maximal
plane graph $G'$ by adding dummy edges. Since $G'$ is maximal plane, it contains
a \emph{contractible} vertex $v$ (different from $a$, $b$, and $c$), as shown in
\cite{aacdfl-mpgdpns-13-c}. Since $v$ is contractible in $G'$, it is either
contractible or \qcv in $G$, as the edges incident to a vertex in $G \cap G'$
are at most those of $G'$.
\end{proof}

Further, given a neighbor $x$ of $v$, we say that $v$ is \emph{$x$-contractible}
onto $x$ in $\Gamma$ if:
\begin{inparaenum}[$(i)$]
    \item $v$ is quasi-contractible, and
    \item the contraction of $v$ onto $x$ in $\Gamma$ results in a
straight-line planar drawing $\Gamma'$ of $G'=G / (v,x)$.
\end{inparaenum}

\subsubsection{The algoritm}
We describe the main steps of our algorithm to pseudo-morph a drawing
$\Gamma_s$ of a plane graph $G$ into another drawing $\Gamma_t$ of $G$.

First, we consider a quasi-contractible vertex $v$ of $G$, that exists by Lemma
\ref{le:qcv_exists}.
Second, we compute a pseudo-morph with $O(n)$ steps of $\Gamma_s$ into a drawing $\Gamma^x_s$ of $G$ and a pseudo-morph with $O(n)$ steps of $\Gamma_t$ into a drawing $\Gamma^x_t$ of $G$, such that $v$ is $x$-contractible onto the same neighbor $x$ both in $\Gamma^x_s$ and in $\Gamma^x_t$. We will describe later how to perform these pseudo-morphs.
Third, we contract $v$ onto $x$ both in $\Gamma^x_s$ and in $\Gamma^x_t$, hence obtaining two
drawings $\Gamma_s'$ and $\Gamma_t'$ of a graph $G' = G / (v,x)$ with $n-1$
vertices.
Fourth, we recursively compute a pseudo-morph transforming $\Gamma_s'$ into $\Gamma_t'$.  This completes the description of the algorithm for constructing a pseudo-morphing transforming $\Gamma_s$ into $\Gamma_t$. Observe that the algorithm has $p(n) \in O(n^2)$ steps, thus proving Theorem~\ref{th:pseudo-general_case}. Namely, as it will be described later, $O(n)$ steps suffice to construct pseudo-morphings of $\Gamma_s$ and $\Gamma_t$ into drawings $\Gamma^x_s$ and $\Gamma^x_t$ of $G$, respectively, such that $v$ is $x$-contractible onto the same neighbor $x$ both in $\Gamma^x_s$ and in $\Gamma^x_t$. Further, two steps are sufficient to contract $v$ onto $x$ in both $\Gamma^x_s$ and $\Gamma^x_t$, obtaining drawings $\Gamma_s'$ and $\Gamma_t'$, respectively. Finally, the recursion on $\Gamma_s'$ and $\Gamma_t'$ takes $p(n-1)$ steps. Thus, $p(n)=p(n-1)+O(n) \in O(n^2)$. We will show in Section~\ref{se:geometry} how to obtain a morph starting from the pseudo-morph computed in this section.

We remark that our approach is similar to the one proposed by Alamdari \emph{et al.}~\cite{aacdfl-mpgdpns-13-c}. In~\cite{aacdfl-mpgdpns-13-c}
$\Gamma_s$ and $\Gamma_t$ are augmented to drawings of the same maximal planar graph with $m\in O(n^2)$ vertices; then, Alamdari \emph{et al.} show how to construct a morphing in $O(m^2)$ steps between two drawings of the same $m$-vertex maximal planar graph. This results in a morphing between $\Gamma_s$ and $\Gamma_t$ with $O(n^4)$ steps. Here, we also augment $\Gamma_s$ and $\Gamma_t$ to drawings of maximal planar graphs. However, we only require that the two
maximal planar graphs coincide in the subgraph induced by the neighbors of $v$. Since this can be achieved by adding a constant number of vertices to $\Gamma_s$ and $\Gamma_t$, namely one for each of the at most five faces $v$ is incident to, our morphing algorithm has $O(n^2)$ steps.

\subsubsection{Making $v$ $x$-contractiblle}\label{sse:algo_gencase}

Let $v$ be a quasi-contractible vertex of $G$. We show an algorithm to construct a pseudo-morph with $O(n)$ steps transforming any straight-line planar drawing $\Gamma$ of $G$ into a straight-line planar drawing $\Gamma'$ of $G$ such that $v$ is $x$-contractible onto any neighbor $x$. If $v$ has degree $1$, then it is contractible into its unique neighbor in $\Gamma$, and there is nothing to prove.

\remove{
\begin{figure}[b]
\centering
\includegraphics[scale=.8]{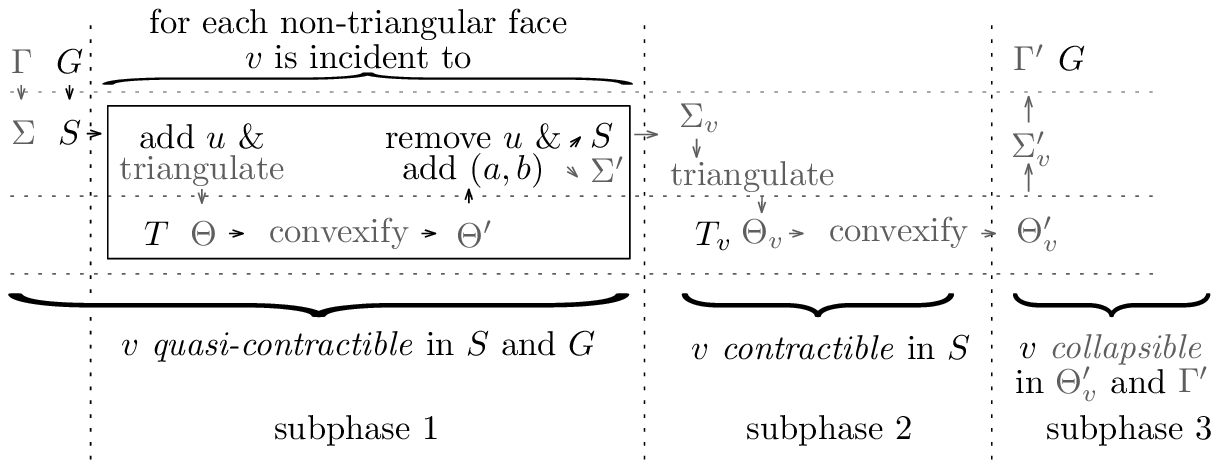}\label{fig:gencase}
\caption{Outline of the first phase of the algorithm, transforming a drawing
$\Gamma$ of $G$ into $\Gamma'$ where vertex $v$ is $x$-contractible. Rows refer
to different (support) graphs, while columns to different sub-phases. Grey
identifies operations on drawings; black those on graphs.}
\end{figure}
}

In order to transform $\Gamma$ into $\Gamma'$, we use a support graph $S$ and
its drawing $\Sigma$, initially set equal to $G$ and $\Gamma$, respectively.
The goal is to augment $S$ and $\Sigma$ so that $v$ becomes a contractible vertex of
$S$. In order to do this, we have to add to $S$ an edge between any two consecutive neighbors of $v$. However, the
insertion of these edges might not be possible in $\Sigma$, as it might lead to a crossing or to enclose some vertex inside a cycle delimited by $v$ and by two consecutive neighbors of $v$ (see \Cref{fig:v_noedge}).

Let $a$ and $b$ be two consecutive neighbors of $v$. If the closed triangle
\mpoly{a,b,v} does not contain any vertex other than $a$, $b$, and $v$, then add
edge $(a,b)$ to $S$ and to $\Sigma$ as a straight-line segment.
Otherwise, proceed as follows. Let $\Sigma_u$ be the drawing of a plane graph
$S_u$ obtained by adding a vertex $u$ and the edges $(u,v)$, $(u,a)$, and
$(u,b)$ to $\Sigma$ and to $S$, in such a way that the resulting drawing is
straight-line planar and each face containing $u$ on its boundary is empty.
As in the proof of F\'{a}ry's Theorem \cite{fary-slrpg-48}, a position for $u$
with such properties can be found in $\Sigma$, suitably close to $v$. See
\Cref{fig:v_dummy} for an example.

\begin{figure}[htb]
\centering
\subfigure[]{\includegraphics[scale=.8]{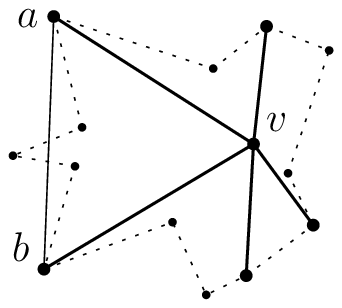}\label{fig:v_noedge}}\hspace{.8em}
\subfigure[]{\includegraphics[scale=.8]{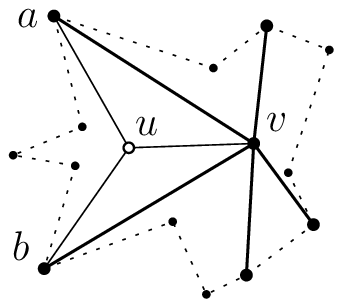}\label{fig:v_dummy}}\hspace{.8em}
\subfigure[]{\includegraphics[scale=.8]{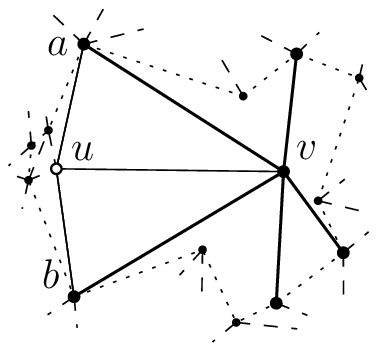}\label{fig:v_convex}}\hspace{.8em}
\subfigure[]{\includegraphics[scale=.8]{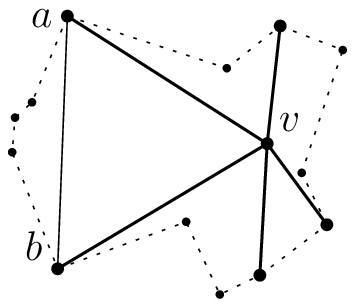}\label{fig:v_edge}}
\caption{ Vertex $v$ and its neighbors. (a) Vertices $a$ and $b$ do not have
direct visibility and the triangle \mpoly{a,b,v} is not empty. (b) A vertex
$u$ is added suitably close to $v$ and connected to $v$, $a$, and $b$. (c) The
output of \conv on the quadrilateral $\langle a,b,v,u \rangle$. (d) Vertex $u$
and its incident edges can be removed in order to insert edge $(a,b)$.
}\label{fig:conv_face}
\end{figure}

Augment $\Sigma_u$ to the drawing $\Theta$ of a maximal plane graph $T$
by first adding three vertices $p$, $q$, and $r$ to $\Sigma_u$, so that triangle
\mpoly{p,q,r} completely encloses the rest of the drawing, and
then adding dummy edges~\cite{chazelle-tsplt-91} till a maximal plane graph is
obtained. If edge $(a,b)$ has been added in this augmentation (this can happen
if $a$ and $b$ share a face not having $v$ on its boundary), subdivide $(a,b)$
in $\Theta$ (namely, replace edge $(a,b)$ with edges $(a,w)$ and $(w,b)$,
placing $w$ along the straight-line segment connecting $a$ and $b$) and
triangulate the two faces vertex $w$ is incident to.

Next, apply the algorithm described in~\cite{aacdfl-mpgdpns-13-c}, that we call \conv, to construct a morphing of $\Theta$ into a drawing $\Theta'$ of $T$ in which polygon \mpoly{a,v,b,u} is convex. The input of algorithm \conv consists of a planar straight-line drawing $\Gamma^*$ of a plane graph $G^*$ and of a set of at most five vertices of $G^*$ inducing a biconnected outerplane graph not containing any other vertex
in its interior in $\Gamma^*$. The output of algorithm \conv is a sequence of
$O(n)$ linear morphing steps transforming $\Gamma^*$ into a drawing of $G^*$ in
which the at most five input vertices bound a convex polygon.
Since, by construction, vertices $a,v,b,u$ satisfy all such requirements, we
can apply algorithm \conv to $\Theta$ and to $a,v,b,u$, hence obtaining a morphing with $O(n)$ steps transforming $\Theta$ into the desired drawing $\Theta'$ (see \Cref{fig:v_convex}).

Let $\Sigma_u'$ be the drawing of $S_u$ obtained by restricting $\Theta'$ to vertices and edges of $S_u$. Since \mpoly{a,v,b,u} is a convex polygon containing no vertex of $S_u$ in its interior, edge $(u,v)$ can be removed from $\Sigma_u'$ and an edge $(a,b)$ can be introduced in $\Sigma_u'$, so that the resulting drawing $\Sigma'$ is planar and cycle $(a,b,v)$ does not contain any vertex in its interior (see \Cref{fig:v_edge}).

Once edge $(a,b)$ has been added to $S$ (either in $\Sigma$ or after the
described procedure transforming $\Sigma$ into $\Sigma'$), if $\deg(v)=2$ then
$v$ is both $a$-contractible and $b$-contractible. Otherwise, consider a new
pair of consecutive vertices of $v$ not creating an empty triangular face with
$v$, if any, and apply the same operations described before.

Once every pair of consecutive vertices has been handled, vertex $v$ is contractible in $S$. Let $\Sigma_v$ be the current drawing of
$S$. Augment $\Sigma_v$ to the drawing $\Theta_v$ of a triangulation $T_v$ (by adding three vertices and a set of edges), contract $v$ onto a neighbor $w$ such
that $v$ is $w$-contractible (one of such neighbors always exists, given that $v$ is contractible), and apply \conv to the resulting drawing $\Theta'_v$ and to the neighbors of $v$ to construct a morphing $\Theta'_v$ to a drawing $\Sigma_v'$ in which the polygon defined by such vertices is convex. Drawing $\Gamma'$ of $G$ in which $v$ is $x$-contractible for any neighbor $x$ of $v$ is obtained by restricting $\Sigma_v'$ to the vertices and the edges of $G$. We can now contract $v$ onto $x$ in $\Gamma'$ and recur on the
obtained graph (with $n-1$ vertices) and drawing.

It remains to observe that, given a quasi-contractible vertex $v$, the procedure to construct a pseudo-morph of $\Gamma$ into $\Gamma'$ consists of at most $\deg(v)+1$ executions of \conv, each requiring a linear number of steps~\cite{aacdfl-mpgdpns-13-c}. As $\deg(v) \leq 5$, the procedure to pseudo-morph $\Gamma$ into $\Gamma'$ has $O(n)$ steps. This concludes the proof of \Cref{th:pseudo-general_case}.

\section{Transforming a Pseudo-Morph into a Morph}\label{se:geometry}
\newcommand{\host}{x\xspace} %nome del vertice su cui contraiamo
\newcommand{\lhost}{l\xspace} %vicino sinistro del vert su cui contriamo
\newcommand{\rhost}{r\xspace} %vicino destro del vert su cui contriamo
\newcommand{\mhost}{$\host$\xspace}
\newcommand{\mlhost}{$\lhost$\xspace}
\newcommand{\mrhost}{$\rhost$\xspace}
\newcommand{\bm}[1]{\boldmath{$#1$}\xspace}
In this section we show how to obtain an actual morph $M$ from a
given pseudo-morph ${\cal M}$, by describing how to compute
the placement and the motion of any vertex $v$ that has been contracted
during ${\cal M}$. By applying this procedure to
\Cref{th:sp-pseudo-morphing,th:pseudo-general_case}, we obtain a proof of
\Cref{th:sp-morphing,th:general_case}.

Let $\Gamma$ be a drawing of a graph $G$ and let ${\cal
M}=$\mmorph{\Gamma,\dots,\Gamma^*} be a pseudo-morph that consists of
the contraction of a vertex $v$ of $G$ onto one of its neighbors \mhost,
followed by a pseudo-morph ${\cal M}'$ of the graph $G' = G / (v,\host)$, and
then of the uncontraction of $v$.

The idea of how to compute $M$ from $\cal M$ is the same
as in~\cite{aacdfl-mpgdpns-13-c}: Namely, morph $M$ is obtained
by
\begin{inparaenum}[$(i)$]
\item recursively converting ${\cal M}'$ into a morph $M'$;
\item modifying $M'$ to a morph $M'_v$ obtained by adding vertex $v$ (and its
incident edges) to each drawing of $M'$, in a suitable position;
\item replacing the contraction of $v$ onto \mhost, performed in $\cal M$, with
a
linear morph that moves $v$ from its initial position in $\Gamma$ to its
position in the first drawing of $M'_v$; and
\item replacing the uncontraction of $v$, performed in $\cal M$, with a linear
morph that moves $v$ from its position in the last drawing of $M'_v$ to its
final position in $\Gamma^*$.
\end{inparaenum}
Note that, in order to guarantee the planarity of $M$ when adding $v$ to any
drawing of $M'$ in order to obtain $M'_v$, vertex $v$ must lie inside its
kernel. Since vertex \mhost lies in the kernel of $v$ (as \mhost is adjacent to
all the neighbors of $v$ in $G'$), we achieve this property by placing
$v$ suitably close to \mhost, as follows.

At any time instant $t$ during $M'$, there exists an
$\epsilon_t > 0$ such that the disk $D$ centered at \mhost with radius
$\epsilon_t$ does not contain any vertex other than \mhost. Let $\epsilon$ be
the minimum among the $\epsilon_t$ during $M'$. We place vertex $v$ at
a suitable point of a sector $S$ of $D$ according to the following cases.

\begin{description}
\item [Case \subref{fig:v_deg1}: \bm{v} has degree \bm{1} in \bm{G}.] Sector
$S$ is defined as the intersection of $D$ with the face containing $v$ in $G$.
See \Cref{fig:v_deg1}.

\item [Case \subref{fig:deg2_noedge}: \bm{v} has degree \bm{2} in \bm{G}.]
Sector $S$ is defined as the intersection of $D$ with the face containing $v$ in
$G$ and with the halfplane defined by the straight-line passing through \mhost
and \mrhost, and containing $v$ in $\Gamma$. See \Cref{fig:deg2_noedge}.
\end{description}
Otherwise, $\deg(v) \geq 3$ in $G'$. Let $(\rhost,v)$ and $(\lhost,v)$ be the
two edges such that
$(\rhost,v)$, $(\host,v)$, and $(\lhost,v)$ are clockwise consecutive around $v$
in $G$.
Observe that edges $(\rhost,\host)$ and $(\lhost,\host)$ exist in $G'$. Assume
that \mhost, \mrhost, and
\mlhost are not collinear in any drawing of $M'$, as otherwise we can slightly
perturb such a drawing without
compromising the planarity of $M'$. Let $\alpha_i$ be the angle $\widehat{\lhost
\host \rhost}$ in any
intermediate drawing of $M'$.\\
\begin{description}
\item [Case \subref{fig:deg3_convex}: \bm{\alpha_i < \pi}.] Sector $S$ is
defined as the intersection of $D$ with the wedge delimited by edges
$(\host,\rhost)$ and $(\host,\lhost)$. See \Cref{fig:deg3_convex}.
\item [Case \subref{fig:deg3_reflex}: \bm{\alpha_i > \pi}.] Sector $S$ is
defined as the intersection of $D$ with the
wedge delimited by the elongations of $(\host,\rhost)$ and
$(\host,\lhost)$ emanating from \mhost. See \Cref{fig:deg3_reflex}.
\end{description}
\begin{figure}[htb]\label{fig:deg2}
\centering
\subfigure[]{\includegraphics[scale=1]{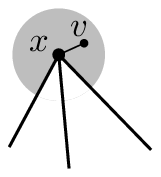}\label{fig:v_deg1}} \hfill
\subfigure[]{\includegraphics[scale=1]{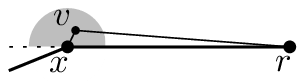}\label{fig:deg2_noedge}}
\hfill
\subfigure[]{\includegraphics[scale=1]{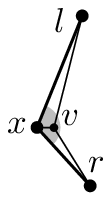}\label{fig:deg3_convex}}\hfill
\subfigure[]{\includegraphics[scale=1]{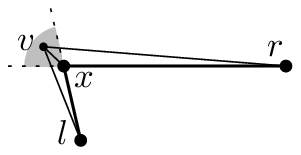}\label{fig:deg3_reflex}}
% \subfigure[]{\includegraphics[scale=1]{deg2_conv}\label{fig:deg2_conv}}
\hspace{1.1em}
% \subfigure[]{\includegraphics[scale=1]{deg3_reflex}\label{fig:deg3_reflex}}
\hspace{1.1em}

\caption{Sector $S$ (in grey) when: \subref{fig:v_deg1} $\deg(v)=1$,
\subref{fig:deg2_noedge} $\deg(v)=2$, and
\subref{fig:deg3_convex}-\subref{fig:deg3_reflex} $\deg(v)\ge 3$.}
\end{figure}

By exploiting the techniques shown in~\cite{aacdfl-mpgdpns-13-c}, the motion of
$v$ can be computed according to the evolution of $S$ over $M'$, thus obtaining
a planar morph $M_v'$. For convenience, we report hereunder the statement of
Lemma~5.2 of~\cite{aacdfl-mpgdpns-13-c}, showing that a contracted vertex can be
placed and moved according to the evolution of a sector defined on one of its
neighbors lying in the kernel.

\begin{lemma}\label{le:genmorph}{\bf (\cite{aacdfl-mpgdpns-13-c})}
Let $\Gamma_1,\dots,\Gamma_k$ be straight-line planar drawings of a $5$-gon $C$
on vertices $a, b, c,
d, e$ in clockwise order such that the morph \mmorph{\Gamma_1,\dots, \Gamma_k}
is planar and vertex $a$ is inside the
kernel of the polygon $C$ at all times during the morph. Then we can augment
each drawing $\Gamma_i$ to a drawing
$\Gamma^p_i$ by adding vertex $p$ at some point $p_i$ inside the kernel of the
polygon $C$ in $\Gamma_i$, and adding
straight line edges from $p$ to each of $a,b,c,d,e$ in such a way that the morph
\mmorph{\Gamma_1,\dots, \Gamma_k} is
planar.
\end{lemma}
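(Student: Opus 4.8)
The plan is to reduce the claim to a single geometric invariant and then to exhibit an explicit, linearly moving placement of $p$ realizing it. The first step is the observation that, once $p$ is joined to all five rim vertices, the augmented drawing $\Gamma^p_i$ is planar \emph{if and only if} $p$ lies in the open kernel of $C$ in $\Gamma_i$: in that case the segments $pa,pb,pc,pd,pe$ cut the interior of $C$ into the five non-degenerate triangles $pab$, $pbc$, $pcd$, $pde$, $pea$, which is exactly what planarity of the resulting wheel demands, and conversely a hub inside a simple pentagon that sees all vertices must lie in the kernel. Thus it suffices to keep $p$ in the open kernel at every instant, while moving it along a straight line within each morphing step.

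Next I would examine the defining half-planes of the kernel. At any time $t$ the kernel is the intersection of the five closed half-planes bounded by the supporting lines of the edges $ab,bc,cd,de,ea$, taken on the interior side. The point $a$ lies \emph{on} the two lines through its incident edges $ab$ and $ea$ (call these the \emph{near} constraints), while, since $a$ is in the kernel throughout, $a$ lies on the interior side of the three \emph{far} lines of $bc,cd,de$; after the harmless perturbation already sanctioned in the paper we may assume $a$ is \emph{strictly} inside these far half-planes at all times, and that the angle of $C$ at $a$ is strictly convex (a reflex apex could not see past its own notch and hence could not be in the kernel). The two near constraints therefore carve out a convex wedge $W(t)$ at $a(t)$, spanned by the rays toward $b$ and toward $e$.

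The key idea is to place $p$ not at a fixed displacement from $a$ but as a fixed affine combination of $a$, $b$, and $e$, so that it stays inside $W(t)$ for free. Concretely I would set
\[
p(t)=(1-\lambda)\,a(t)+\tfrac{\lambda}{2}\,b(t)+\tfrac{\lambda}{2}\,e(t)
\]
for one sufficiently small constant $\lambda\in(0,1)$. Since $a(t),b(t),e(t)$ are each linear in $t$ during a morphing step, $p(t)$ is linear as well, so the augmented transformation is again a sequence of linear morphs; and since the formula depends only on the shared vertex positions at each $\Gamma_i$, the trajectory is automatically continuous across consecutive steps, which simultaneously pins down the required placements $p_i$ (and, by strict convexity at $a$, guarantees $p\neq a$).

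Finally I would verify that $p(t)$ lies in the open kernel at all times. For the near constraints this holds for \emph{any} $\lambda\in(0,1)$: $p(t)$ is a convex combination of $a,b,e$ with positive weight on both $b$ and $e$, hence strictly inside the convex wedge $W(t)$, i.e.\ strictly on the interior side of the lines $a(t)b(t)$ and $a(t)e(t)$. For a far line $\ell$ with interior-positive signed distance $d(\cdot,\ell)$, linearity of the signed distance gives $d(p,\ell)=(1-\lambda)\,d(a,\ell)+\tfrac{\lambda}{2}\,d(b,\ell)+\tfrac{\lambda}{2}\,d(e,\ell)$; here $d(a,\ell)$ is bounded below by a positive constant (by compactness of each step interval, strict interiority of $a$, and finiteness of the number of steps), while $d(b,\ell)$ and $d(e,\ell)$ are bounded in modulus, so a uniformly small $\lambda$ makes $d(p,\ell)>0$ everywhere. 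The main obstacle I anticipate is exactly the temptation that a straight-line interpolation between two kernel points remain in the kernel, which fails because the kernel deforms nonlinearly in $t$; writing $p$ as an affine combination of the three vertices that \emph{define} the binding near-constraints, and keeping $p$ close to $a$ through a uniformly small $\lambda$ so the far constraints inherit the slack of $a$, is what dissolves this difficulty.
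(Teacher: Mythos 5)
There is a genuine gap, and it sits exactly where you dismiss a case. You assert that since $a$ lies in the kernel of $C$, the interior angle of $C$ at $a$ must be strictly convex (``a reflex apex could not \dots be in the kernel''). This is false. A reflex vertex can lie in the (closed) kernel of a star-shaped polygon: take $a=(0,1)$, $b=(-1,0)$, $e=(1,0)$ and place $c,d$ high above, so that the boundary near $a$ is the V-shaped notch $y=1-|x|$; then $a$ sees the entire polygon, yet its interior angle at $a$ is reflex and the kernel is contained in the wedge $\{y\ge 1+|x|\}$, which is \emph{antipodal} to the cone spanned by $b-a$ and $e-a$. Your point $p=a+\tfrac{\lambda}{2}(b-a)+\tfrac{\lambda}{2}(e-a)$ then lands inside the notch, i.e.\ outside the polygon, for every $\lambda>0$. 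This is not a pathological corner case: the paper imports this lemma from Alamdari et al.\ without reproving it, but the construction it is meant to support (Section~\ref{se:geometry}) explicitly splits into the case $\alpha_i<\pi$, where the sector lives in the wedge bounded by the edges $(x,r)$ and $(x,l)$, and the case $\alpha_i>\pi$, where the sector lives in the wedge bounded by their \emph{elongations} --- precisely because both occur. The repair in your language is to use $p=a-\tfrac{\lambda}{2}(b-a)-\tfrac{\lambda}{2}(e-a)$ when the angle at $a$ is reflex; but you must then also argue that the angle cannot cross $\pi$ during a single linear step (or subdivide the step where it does), since switching formulas mid-step destroys the linearity of $p$'s trajectory.

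Apart from this, your argument is in the same spirit as the cited construction and in places more explicit: the reduction of planarity of the augmented wheel to membership of $p$ in the open kernel, the affine-combination placement that makes $p$'s motion linear within each step and continuous across steps, and the uniform-$\lambda$ compactness argument transferring the slack of $a$ in the three ``far'' half-planes to $p$ are all sound in the convex case. One smaller point: the perturbation ``sanctioned in the paper'' only concerns collinearity of $x$, $l$, $r$, i.e.\ the near constraints; your additional assumption that $a$ is \emph{strictly} inside the three far half-planes at every time instant (not merely at the drawings $\Gamma_i$) needs its own justification, since $a$ lying on a far supporting line at an interior time of a step would void the positive lower bound on which your choice of $\lambda$ rests.
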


Observe that, in the algorithm described in \Cref{se:general_case}, the vertex
$x$ onto which $v$ has been contracted might be not adjacent to $v$ in $G$.
However, since a contraction has been performed, $x$ is adjacent to $v$ in one
of the graphs obtained when augmenting $G$ during the algorithm. Hence, a morph
of $G$ can be obtained by applying the above procedure to the pseudo-morph
computed on this augmented graph and by restricting it to the vertices and edges
of $G$.

\section{Conclusions and Open Problems}\label{se:conclusions}

In this paper we studied the problem of designing efficient algorithms for
morphing two planar straight-line drawings of the same graph. We proved that any
two planar straight-line drawings of a series-parallel graph can be morphed with
$O(n)$ linear morphing steps, and that a planar morph with $O(n^2)$ linear
morphing steps exists between any two planar straight-line drawings of any
planar graph.

It is a natural open question whether the bounds we presented are optimal or
not. We suspect that planar straight-line drawings exist requiring a linear
number of steps to be morphed one into the other. However, no super-constant
lower bound for the number of morphing steps required to morph planar
straight-line drawings is known.

It would be interesting to understand whether our techniques can be extended to
compute morphs between any two drawings of a \emph{partial planar $3$-tree} with
a linear number of steps. We recall that, as observed
in~\cite{aacdfl-mpgdpns-13-c}, a linear number of morphing steps suffices to
morph any two drawings of a \emph{maximal planar $3$-tree}.

\bibliographystyle{splncs03}
\bibliography{bibliography}

% \newpage
% \clearpage
% \pagestyle{plain}
% \pagenumbering{roman}
% \setcounter{page}{1}
%\input{appendix}

\end{document}